\newtheorem{remark}{Remark}
\newtheorem{theorem}{Theorem}
\newtheorem{definition}{Definition}
\newtheorem{algorithm}{Algorithm}
\newtheorem{proposition}{Proposition}
\title{\Large The Moran model with selection: Fixation probabilities,
ancestral lines, and an alternative particle representation}
\author{\large Sandra Kluth,
Ellen Baake\footnote{Corresponding author. \newline 
E-mail address: ebaake\emph{@}techfak.uni-bielefeld.de \newline
Tel.: +49 521 106 4896 (E. Baake)} \\
\small{Faculty of Technology,
Bielefeld University,
Box 100131,
33501 Bielefeld, Germany } \\
}
\date{}
\begin{document}
 
\maketitle

\begin{abstract}
\textbf{Abstract.}
We reconsider 
the Moran model 
in continuous time with population size $N$, two allelic types, and selection.
We introduce a new particle representation, 
which we call the labelled Moran model, and which has the same distribution of type frequencies
as the original Moran model, provided the initial values are chosen appropriately.
In the new model, individuals are labelled $1,2, \dots, N$; 
neutral resampling events may take place between arbitrary labels, whereas
selective events 
only occur in the direction of
increasing labels. 
With the help of elementary methods only,
we not only recover fixation probabilities, but also obtain detailed insight into
the number and nature of the selective events that play a role in the fixation process
forward in time.

\textbf{Key words:} Moran model with selection, fixation probability, labelled Moran model, 
defining event, ancestral line.
\end{abstract}

\section{Introduction}

In population genetics models for finite populations including
selection, many properties are still unknown, in particular when it
comes to ancestral processes and genealogies. This is why a
classic model, namely the Moran model
in continuous time with population size $N$, two allelic types, and (fertility) selection is
still the subject of state-of-the-art research, see e.g. Bah et al. (2012), Mano (2009),
Houchmandzadeh and Vallade (2010), or
Pokalyuk and Pfaffelhuber (2013). In particular, new insights are currently being obtained 
by drawing a more fine-grained picture: Rather than looking only
at the type frequencies as a function of time, one considers
ancestral lines and genealogies of samples 
as in the ancestral selection graph (ASG) (Krone and Neuhauser (1997), Neuhauser and Krone (1997)), or 
even particle representations
that make all individual lines and their interactions explicit (here the main tool is
the lookdown construction of
Donnelly and Kurtz (1996, 1999)). See Etheridge (2011, Ch. 5) for an excellent overview
of the area.

The way that fixation probabilities (that is, the probabilities
that the individuals of a given  type eventually take over in the 
population) are dealt with is typical of this development.
The classical result is that of Kimura (1962), which is based on
type frequencies in the diffusion limit. Short and elegant standard arguments today 
start from the discrete setting, use a first-step or martingale approach,
and arrive at the fixation probability in the form of
a (normalised) series expansion in the reproduction rate of the 
favourable type (cf. Durrett (2008, Thm. 6.1)). An alternative formulation is
given by Kluth et al. (2013). Both are easily seen to converge to Kimura's
result in the diffusion limit.  

Recently, Mano (2009) obtained the fixation probabilities with the help of the
ASG, based on methods of duality (again in the diffusion limit); his argument is nicely
recapitulated by Pokalyuk and Pfaffelhuber (2013). What is still missing is a derivation within the framework
of a full particle representation.

This is done in the present article. To this end, we introduce an alternative
particle system, which we call the \emph{labelled Moran model}, and
which is particularly well-suited for finite populations under selection.
It is reminiscent of the 
\textit{$N$-particle look-down process}, but the main new idea is that each
of the $N$ individuals is characterised by a different reproductive
behaviour, which we indicate by a label. With the help of elementary methods only,
we not only recover fixation probabilities, but also obtain detailed insight into
the number and nature of the selective events that play a role in the fixation process
forward in time.

The paper is organised as follows. We start with a brief survey of the Moran model
with selection (Sec. \ref{The Moran model with selection}).
Sec. \ref{Reflection principle} is a warm-up exercise that characterises fixation probabilities
by way of a reflection principle. 
Sec. \ref{Labelled Moran model} introduces the labelled Moran model. With the help
of a coupling argument that involves permutations of the reproductive events, we obtain the probability that a 
particular label becomes fixed. Within 
this setting, we identify reproduction events that affect the fixation probability of a given label; 
they are termed \textit{defining events}. 
The number of defining events that are \textit{selective} turns out as the pivotal quantity
characterising the fixation probabilities of the individual labels, and hence the 
label of the line that will become ancestral to the entire population.
In Sec. \ref{Diffusion limit of the labelled Moran model} we pass
to the diffusion limit. We continue with a simulation algorithm that generates
the label that becomes fixed together with the targets of the selective defining events
(Sec. \ref{A simulation algorithm}).
Sec. \ref{Discussion} summarises and discusses the results.

\section{The Moran model with selection}
\label{The Moran model with selection}
We consider a haploid population of fixed size $N \in \mathbb{N}$ in 
which each individual is characterised by a type $i \in S = \{A,B\}$. 
If an individual reproduces, its single offspring inherits the parent's 
type and replaces a randomly chosen individual, maybe its own parent. This
way the replaced individual dies and the population size remains constant. 

Individuals of type $B$ reproduce at rate $1$, whereas individuals of type 
$A$ reproduce at rate $1+s^{}_N$, $s^{}_N \geq 0$. 
Accordingly, type-$A$ individuals are termed  `more fit', type-$B$ individuals are
`less fit'. 
In line with a central idea of the ASG, we will
decompose reproduction events into \textit{neutral} and \textit{selective} ones. Neutral ones
occur at rate $1$ and happen to all individuals, whereas  selective events 
occur at rate $s_N^{}$ and are
reserved for type-$A$ individuals. (At this stage, these rates are understood as rates per individual.)

The Moran model has a well-known graphical representation as an
interacting particle system (cf. Fig. \ref{moran model}).
The $N$ vertical lines represent the $N$ individuals and time runs from 
top to bottom in the figure. Reproduction events are represented by arrows with the 
reproducing individual at the tail and the offspring at the head.

\begin{figure}[ht]
\begin{center}
\psfrag{t}{\Large$t$}
\psfrag{0}{\Large$A$}
\psfrag{1}{\Large$B$}
\includegraphics[angle=0, width=7cm, height=5cm]{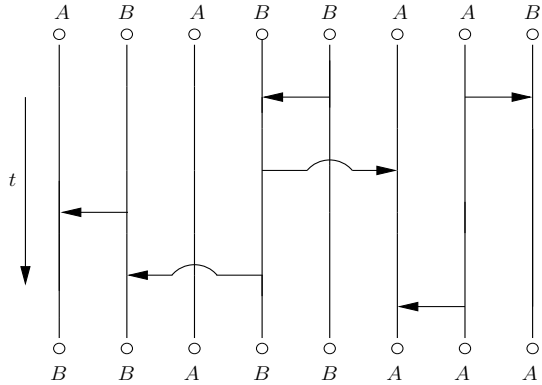}
\caption{The Moran model. The types ($A=$ more fit, $B=$ less fit) are
indicated for the initial population (top) and the final one (bottom). }
\label{moran model}
\end{center}
\end{figure} 

We now consider the process $\left( Z_{t}^N \right)_{t \geq 0}$ (or simply $Z^N_{}$), where 
$Z_{t}^N$ is the number of individuals of type $A$ at time $t$.
(Note that this is the type frequency representation, which contains
less information than the particle (or graphical) representation in Fig.~\ref{moran model}.) 
$Z^N_{}$ is a birth-death process with birth rates $\lambda_i^N$ and
death rates $\mu_i^N$ when in state $i$, where
\begin{equation}
 \label{eq:lambda_mu}
\lambda_i^{N}=(1+s_N^{}) i \frac{N-i}{N}\quad \text{and}
\quad \mu_i^{N}=(N-i)\frac{i}{N}. 
\end{equation}
The absorbing states are $0$ and $N$, thus, one of the two types, $A$ or $B$, will
almost surely become fixed in the population in finite time. 
Let $T_k^N := \min \{ t \geq 0 : Z_t^N = k  \}$, $0 \leq k \leq N$, be the first hitting time of state $k$ by $Z^N_{}$.
The \textit{fixation probability} of type $A$ given that there are initially $k$ type-$A$ individuals
is well known to be (cf. Durrett (2008, Thm. 6.1))
\begin{equation}
 h_k^{N} := \mathbb{P}( T_N^{N}< T_0^{N} \mid Z^N_0 = k ) 
 = \frac{\sum_{j=N-k}^{N-1} (1+s_N^{})^j_{}}{\sum_{j=0}^{N-1}(1+s_N^{})^j_{}}.
 \label{h_i}
\end{equation}

An alternative formulation is obtained by Kluth et al. (2013), where fixation probabilities
under selection \emph{and} mutation are considered. It is easily verified that, taking together their
Eq.~(40), Theorem 2, and Eq.~(30) and setting the mutation rate to zero
gives the representation
\begin{equation}\label{h_i_alt}
h_k^N = \frac{k}{N} +  \sum_{n=1}^{N-1} a_n^N \frac{k}{N} \prod_{j=0}^{n-1} \frac{N-k-j}{N-1-j},
\end{equation}
where the coefficients $a_n^N$ follow the recursion
\begin{equation}\label{a_n}
a_n^N - a_{n+1}^N = s_N^{} \frac{N-n}{n+1} (a_{n-1}^N - a_n^N)
\end{equation}
for $1 \leq n \leq N-2$,
with initial conditions
\begin{equation}\label{a_0}
a_0^N=1 \text{ \ and \ } a_1^N = a_0^N - N(1-h_{N-1}^N) .
\end{equation}
Eq. \eqref{h_i_alt} is suggestive: It decomposes the fixation probability into the neutral part
($k/N$) plus additional terms due to offspring created by selective events. It is the purpose 
of this paper to make this interpretation more precise, and to arrive at a thorough understanding
of Eq. \eqref{h_i_alt}. In particular, we aim at a probabilistic understanding of the coefficients
$a_n^N$.

Before doing this it is useful to consider the model in the diffusion limit. To this end, we use the usual rescaling 
\begin{equation*}
\left( X_{t}^N  \right)_{t \geq 0} := \frac{1}{N} \left( Z_{Nt}^N  \right)_{t \geq 0},
\end{equation*}
and assume that $\lim_{N \to \infty}Ns_N^{} = \sigma$, $0 \leq \sigma < \infty$. As $N \to \infty$, $ ( X_{t}^N )_{t \geq 0}$
converges in distribution to
the well known diffusion process
$\left(X_t^{} \right)_{t \geq 0}$,
which is characterised by the drift coefficient
$ a(x) = \sigma x (1-x) $
and the diffusion coefficient
$ b(x)= 2 x (1-x).$
Hence, the infinitesimal generator $\mathcal{A}$ of the diffusion is defined by
\begin{equation*}
\mathcal{A}f(x)= x (1-x)  \frac{\partial^2{}}{\partial{x^2}} f(x)  + \sigma x (1-x)  \frac{\partial{}}{\partial{x}} f(x) , 
\ f \in \mathcal{C}^2_{}([0,1]).
\end{equation*}
Define the first passage time
$T_x^{}:= \inf \{ t \geq 0 : X_t^{} = x \}$, $x \in [0,1]$, as the first time that $X_t^{}$ equals $x$. For $\sigma \neq 0$ the fixation
probability of type $A$ follows a classical result of Kimura (1962)
(see also Ewens (2004, Ch. 5.3)):
\begin{equation}
 h(x):= \mathbb{P} (T_1 < T_0 \mid X_0^{} = x)
 = \frac{1-\exp(- \sigma x)}{1 - \exp(- \sigma)}.
 \label{h}
\end{equation}

See also Ewens (2004, Ch. 4, 5) or Karlin and Taylor (1981, Ch. 15) for reviews of the diffusion 
process describing
the Moran model with selection and its
probability of fixation.

According to Kluth et al. (2013) and previous results of Fearnhead (2002)
and Taylor (2007) (obtained in the context of the common ancestor process in the 
Moran model with selection and mutation), the equivalent of \eqref{h_i_alt} reads 
\begin{equation}
 h(x)= x + \sum_{n \geq 1} a_n^{} x(1-x)^n
 \label{h(x)}
\end{equation}
with coefficients $a_n^{}= \lim_{N \to \infty} a_n^N$, $n \geq 1$. Following \eqref{a_n} and \eqref{a_0},
the $a_n^{}$ are characterised by the recursion
\begin{equation}
 a_n^{} - a_{n+1}^{} = \frac{\sigma}{n+1} (a_{n-1}^{} - a_n^{})
\label{a_n Fearnhead}
 \end{equation}
with initial conditions
\begin{equation}
 a_0^{}=1 \text{ \ and \ } a_1^{}=a_0^{}- h'(1)  .
 \label{a_0 Fearnhead}
\end{equation}

\section{Reflection principle}
\label{Reflection principle}
Durrett (2008, Ch. 6.1.1) proves equation \eqref{h_i} in two ways, namely, using
a first-step approach and a martingale argument, respectively.
Both approaches rely on the process $\left( Z^N_t \right)_{t \geq 0}$, without reference
to an underlying particle representation. As a warm-up exercise,
we complement this by an approach based on the particle picture, which we
call the \textit{reflection principle}. 
\begin{definition}
 Let a graphical realisation of the Moran model be given, with $Z_0^N = k$, $1 \leq k \leq N-1$,
 and fixation of type $A$. Now interchange the types (i.e., replace all $A$ individuals
 by $B$ individuals and vice versa), without otherwise changing the graphical realisation. This 
 results in a graphical realisation of the Moran model with 
 $Z_0^N = N-k$ in which type $B$ becomes fixed, and is called
 the reflected realisation.
\end{definition}
Put differently, in the case $Z_0^N = k$, $1 \leq k \leq N-1$,
reflection transforms a realisation in which the (offspring of the) $k$ type-$A$ individuals
become fixed (altogether, this happens with probability $h^N_k$) into a realisation in
which the (offspring of the) type-$B$ individuals become fixed (which happens with probability $1 - h^N_{N-k}$), and vice versa.
This operation does not change the graphical structure, but the weights of the realisations
are different since a different weight is attached to some of the arrows. This is best explained
in the example given in Fig. \ref{mirroring}: 
Bold lines represent type-$A$ individuals,
thin ones type-$B$ individuals; likewise, arrows emanating from type-$A$
(type-$B$) individuals are bold (thin). Interchange of types transforms the realisation
on the left into the realisation on the right and vice versa, such that
the respective other type becomes fixed. Arrows that are marked by $1$ respectively $1+s_N^{}$ appear at rate $1/N$ respectively
$(1+s_N^{}) /N$ (per \textit{pair} of individuals).
\begin{figure}[htbp]
\begin{center}
\begin{minipage}{0.4 \textwidth}
\begin{center}
\psfrag{A}{\Large$A$}
\psfrag{B}{\Large$B$}
\psfrag{t}{\Large$t$}
\psfrag{1}{\Large$1$}
\psfrag{1+s}{\Large$1+s_N^{}$}
\includegraphics[angle=0, width=6cm, height=4.5cm]{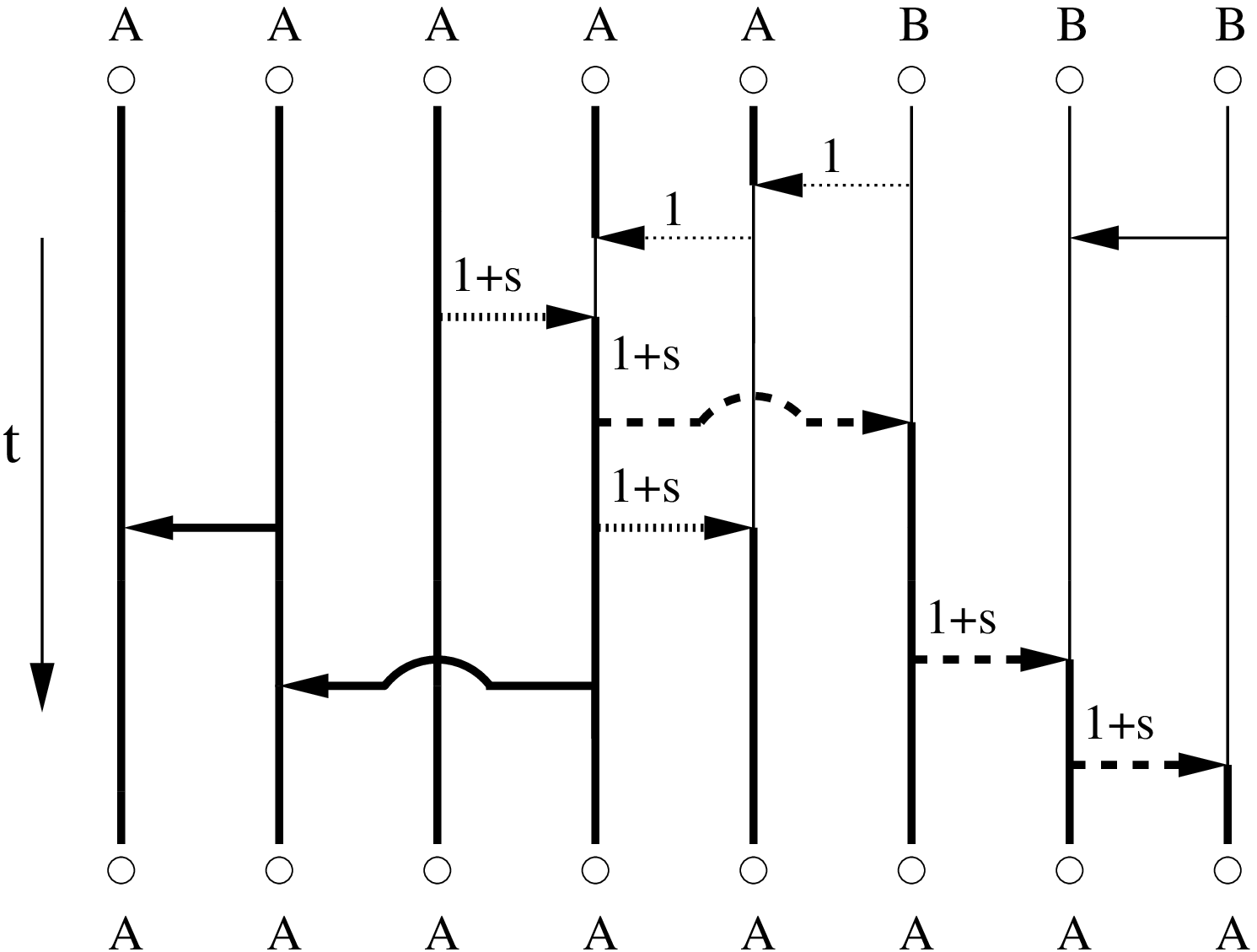}
\end{center}
\end{minipage}
\hspace{0.5cm}
\begin{minipage}{0.4 \textwidth}
\begin{center}
\psfrag{A}{\Large$A$}
\psfrag{B}{\Large$B$}
\psfrag{t}{}
\psfrag{1}{\Large$1$}
\psfrag{1+s}{\Large$1+s_N^{}$}
\includegraphics[angle=0, width=6cm, height=4.5cm]{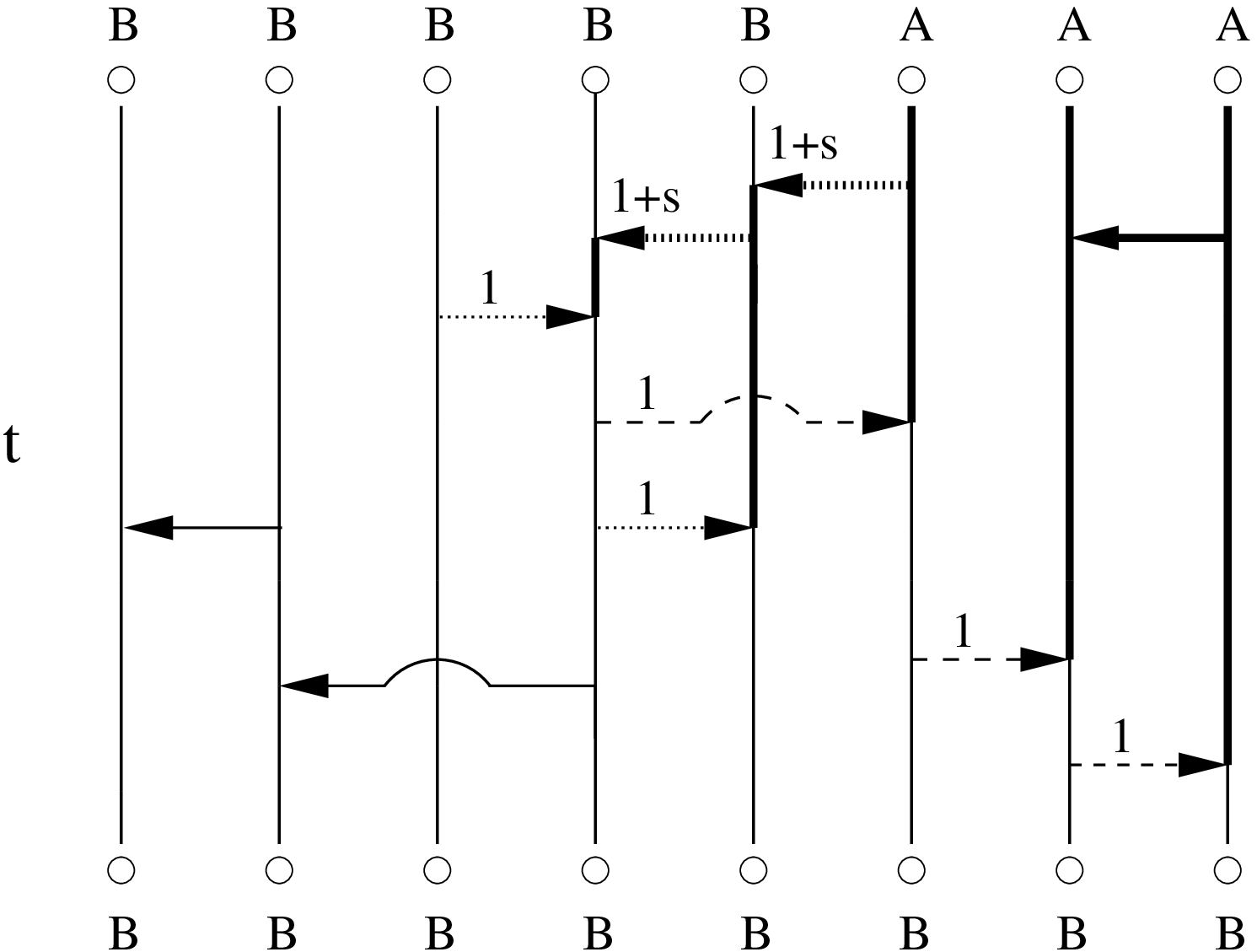}
\end{center}
\end{minipage}
\caption{Reflection principle in the (reduced) Moran model: $N=8$, $k=5$, realisations $\omega$ (left) and $\bar{\omega}$
(right). 
Altogether $P^8_{3}(\bar{\omega}) d \omega= (1+s_N^{})^{-3}  (1+s_N^{})^2 (1+s_N^{})^{-2} P^8_{5}(\omega) d \omega  $. }
\label{mirroring}
\end{center}
\end{figure}

To make the situation tractable, we will work with what we will call the
\textit{reduced Moran model}: Starting from the original Moran model, we remove those selective
arrows that appear between individuals that are both of 
type $A$. Obviously, this does not affect the process $Z^N_{}$
(in particular, it does not change the fixation probabilities), but it
changes the graphical representation. That is, in Fig. \ref{mirroring} unmarked arrows appear at rate $1/N$.

Let now $\Omega^N_{}$ be the set of graphical realisations of the reduced
Moran model for $t \in [0,T]$, where $T:=\min \{ T_0^N, T_N^N \}$. Let
$P^N_{k}$ be the probability measure on $\Omega^N_{}$, given $Z_0^N = k$, $1 \leq k \leq N-1$. For a realisation 
$\omega \in \Omega^N_{}$ with  $Z_0^N = k$, $1 \leq k \leq N-1$, in which $A$ becomes fixed (cf. Fig. \ref{mirroring}, left),
we define $\bar{\omega} \in \Omega^N_{}$
as the corresponding reflected realisation (cf. Fig. \ref{mirroring}, right).

Our strategy will now be to derive the fixation probabilities 
by comparing the weights of $\omega$ and $\bar{\omega}$.
To assess the relative
weights of $P^N_{k}(\omega) d\omega$ and $P^N_{N-k}(\bar{\omega}) d \omega$, note first that, since
$A$ becomes fixed in $\omega$, all 
$N - k$ type-$B$ individuals have to be replaced by type-$A$ individuals, i.e. by arrows that appear at rate
$(1+s_N^{})/N$. 
In $\bar{\omega}$, the corresponding arrows point from type-$B$ to type-$A$ individuals
and thus occur only at rate $1/N$. 
Second, we have to take into account so-called
\textit{external descendants} defined as follows:
\begin{definition}
 A descendant of a type-$i$ individual, $i \in S$, that originates by replacing an individual
 of a different type ($j \neq i$) 
 is termed external
 descendant of type $i$. 
 \label{new descendant}
\end{definition}
\begin{remark}
 The total number of external descendants of type $i$ is almost surely finite.
\end{remark}
Every external descendant of a type-$B$ individual goes back to an arrow that occurs at
rate $1/N$. If the type-$A$ individuals go to fixation (as in $\omega$), all external descendants of 
type $B$ must eventually be replaced by arrows that emanate from type-$A$ 
individuals and therefore occur at rate $(1 +s_N^{})/N$. 
In $\bar{\omega}$ this situation corresponds to external descendants of
type $A$ that originate from arrows at rate $(1 + s_N^{})/N$ and 
are eventually eliminated by arrows that emanate from type-$B$ individuals. 
In Fig. \ref{mirroring}, the births of external descendants
and their replacements are represented by dotted arrows, which always appear in pairs. 
Dashed arrows represent the
elimination of individuals (except external descendants) of the type that
eventually gets lost in the population. 

Let now $D^{N}_B(\omega)$ be the number of external descendants of type $B$ in $\omega$.
Then $D^{N}_B (\omega) < \infty$ almost surely and we obtain for the measure of
$\bar{\omega}$
\begin{align*}
 P^N_{N-k}(\bar{\omega}) d\omega  
 &= (1+s_N^{})^{-(N-k)} (1+s_N^{})^{D_B^{N}(\omega)} \Bigl( \frac{1}{1+s_N^{}} \Bigr)^{D_B^{N}(\omega)}  P^N_{k}(\omega) d\omega \displaybreak[0] \\
 &=(1+s_N^{})^{-(N-k)} P^N_{k}(\omega) d\omega.
\end{align*}
Note that the effects of creating external descendants and their replacement cancel 
each other, so that the relative weights of $P^N_{N-k}(\bar{\omega}) d\omega  $
and $ P^N_{k}(\omega) d\omega$ do not depend on $\omega$. Since reflection
provides a one-to-one correspondence between realisations with fixation of type $A$ and of type $B$,
respectively, we obtain the system of equations
\begin{equation}
1 - h_{N-k}^N =(1+s_N^{})^{-(N-k)}  h_k^N , \quad 1 \leq k \leq N-1 ,
\end{equation}
which is supplemented by $h_0^N = 0$ and $h_N^N =1 $, and which is 
solved by \eqref{h_i}.

\section{Labelled Moran model}
\label{Labelled Moran model}
In this Section we introduce a new particle model, which we call the
\textit{labelled Moran model}, and which has the same 
distribution of type frequencies as the original Moran model with selection of Sec. \ref{The Moran model with selection},
provided the initial conditions are chosen appropriately.
As before, we consider a population of fixed size $N$ in continuous time, but now 
every individual is assigned a label $i \in \{ 1, \dots , N \}$ with different
reproductive behaviour to be specified below. The initial population contains all $N$ labels
so that initially position $i$ in the graphical representation is occupied by label $i$, $1 \leq i \leq N$.
As in the original Moran model, birth events are represented by arrows; they lead to a single offspring, 
which 
inherits the parent's label, and replaces an individual as explained below. 
Again we distinguish
between neutral (at rate $1/N$) and selective (at rate $s_N^{}/N$) events.
Neutral arrows appear as before, at rate $1/N$ per ordered pair of lines, 
irrespective of their labels.
But we only allow for selective arrows emanating from a label $i$ and pointing
to a label $j > i$ (at rate $s_N^{}/N$ per ordered pair of lines with 
such labels). 
Equivalently, we may take together both types of arrows, such that an arrow points from a label 
$i$ to a label $j \leq i$ at rate $1/N$ and to label $j>i$ at rate $(1+s_N^{})/N$. We will
make use of both points of view.
The idea is to have
each label behave as `less fit' towards lower labels and as `more fit' towards higher labels.
 An example is given in Fig. \ref{labelmoran},
where neutral arrows are marked by $1$ and selective arrows by $s_N^{}$.

\begin{figure}[ht]
\begin{center}
\psfrag{t}{\Large$t$}
\psfrag{2}{\Large$2$}
\psfrag{1}{\Large$1$}
\psfrag{3}{\Large$3$}
\psfrag{4}{\Large$4$}
\psfrag{5}{\Large$5$}
\psfrag{6}{\Large$6$}
\psfrag{7}{\Large$7$}
\psfrag{8}{\Large$8$}
\psfrag{1+s}{\Large$1+s_N^{}$}
\psfrag{s}{\Large$s_N^{}$}
\includegraphics[angle=0, width=7cm, height=5cm]{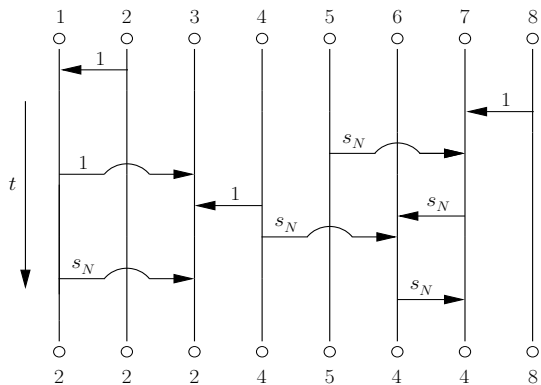}
\caption{The labelled Moran model. The labels are
indicated for the initial population (top) and a later one (bottom).
}
\label{labelmoran}
\end{center}
\end{figure}

\subsection{Ancestors and fixation probabilities}
Since there is no mutation in the labelled Moran model, one of the $N $ labels will eventually
take over in the population, i.e. one label will become fixed. We denote this label by $I^N_{}$ 
and term it the ancestor. Its distribution is given in
Thm. \ref{Thm Fixation}.

\begin{theorem}
 $I^N_{}$
 is distributed according to
 \begin{align} \label{eta_i Thm}
  \eta_i^{N}&:= \mathbb{P} (I^N_{}=i)
  = (1+s_N^{})^{N-i}_{} \eta_N^{N} = h_i^N - h_{i-1}^N, \ 1 \leq i \leq N, \\ 
  \intertext{with}
  \eta_N^{N}&:= \mathbb{P} (I^N_{}=N)
  = \frac{1}{\sum_{j=0}^{N-1}(1+s_N^{})^j_{}} = 1 - h^N_{N-1}.
  \label{eta_N}
 \end{align}
\label{Thm Fixation}
\end{theorem}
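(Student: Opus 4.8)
The plan is to reduce the statement to three elementary ingredients and then assemble it by bookkeeping. The ingredients are: (i) the ratio identity $\eta_i^N = (1+s_N^{})\,\eta_{i+1}^N$ for $1 \le i \le N-1$; (ii) normalisation, $\sum_{i=1}^{N}\eta_i^N = 1$; and (iii) the coarsening identity $\sum_{i=1}^{m}\eta_i^N = h_m^N$ for $0 \le m \le N$ (with $h_0^N = 0$, $h_N^N = 1$). Granting these, (i) and (ii) force $\eta_i^N = (1+s_N^{})^{N-i}_{}\eta_N^N$ and $\eta_N^N\sum_{j=0}^{N-1}(1+s_N^{})^j_{} = 1$, i.e.\ $\eta_N^N = \big(\sum_{j=0}^{N-1}(1+s_N^{})^j_{}\big)^{-1}$, which is \eqref{eta_N} up to its last equality; telescoping (iii) at $m$ and $m-1$ gives $\eta_i^N = h_i^N - h_{i-1}^N$, the last equality in \eqref{eta_i Thm}, and taking $m=N$ against $m=N-1$ gives $\eta_N^N = 1 - h_{N-1}^N$. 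As a by-product, $h_m^N = \eta_N^N\sum_{j=N-m}^{N-1}(1+s_N^{})^j_{}$ reproduces \eqref{h_i}, so the labelled model genuinely \emph{recovers} the fixation probabilities rather than assuming them.

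Ingredient (ii) is immediate: without mutation exactly one label survives, so the events $\{I^N = i\}$, $1\le i\le N$, partition the sample space. For (iii) I would fix $m$ and colour a line with type $A$ if its current label is $\le m$ and with type $B$ otherwise; initially this places type $A$ on lines $1,\dots,m$, so the $A$-count starts at $m$. One then checks that this $A/B$ count is precisely the birth--death chain $Z^N$ of Sec.~\ref{The Moran model with selection}: neutral arrows act at rate $1/N$ per ordered pair exactly as in the original model; a selective arrow from an $A$-line (label $a \le m$) to a $B$-line (label $b > m \ge a$) points towards the larger label, hence acts, and it is precisely an $A\to B$ replacement, contributing the extra rate $s_N^{}/N$ per ordered $(A,B)$ pair; all remaining selective arrows stay within $A$ or within $B$ and leave the count unchanged. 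This reproduces the rates \eqref{eq:lambda_mu}, so $\PP(I^N \le m) = \PP(A\text{ fixes}\mid Z_0^N = m) = h_m^N$. (This is just the ``same type-frequency distribution'' property of the labelled model of Sec.~\ref{Labelled Moran model}, read off at the threshold $m$.)

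Ingredient (i) is the heart of the matter, and where I expect the real work to lie. The plan is a coupling between realisations in which label $i+1$ fixes and realisations in which label $i$ fixes, obtained by transposing the roles of these two labels. What makes this promising is a symmetry: towards any third label $k\notin\{i,i+1\}$ the labels $i$ and $i+1$ behave identically (both receive a selective arrow from $k$ when $k<i$, both send one to $k$ when $k>i+1$, and neutral arrows ignore labels), so the only asymmetric interaction is the one directly between an $i$-line and an $(i{+}1)$-line — a selective arrow acts from an $i$-line onto an $(i{+}1)$-line but not the other way round. Working on the Poissonian arrow space (neutral arrows at rate $1/N$, selective arrows at rate $s_N^{}/N$ acting only in the direction of increasing label, between ordered pairs of lines), I would take a realisation with label $i+1$ fixing, swap $i\leftrightarrow i+1$, and perform a surgery along the $i/(i{+}1)$ interface — reorienting and re-marking the arrows connecting an $i$-line to an $(i{+}1)$-line — so as to produce a valid realisation in which label $i$ fixes, and argue that this is a bijection onto $\{I^N=i\}$. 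The weight accounting should then parallel the reflection principle of Sec.~\ref{Reflection principle}: creations of ``interface descendants'' and their later eliminations occur in pairs whose rate contributions $s_N^{}$ and $1$ cancel, leaving a single uncancelled factor $1+s_N^{}$ — the one extra acting selective arrow needed to flip which of the two labels wins. Making this surgery canonical and verifying the pairing rigorously is the technical obstacle; everything else is routine.

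Finally, a shortcut worth recording: if one is willing to \emph{use} \eqref{h_i} instead of re-deriving it, ingredients (i) and (ii) become superfluous — (iii) already gives $\eta_i^N = h_i^N - h_{i-1}^N$, and substituting \eqref{h_i} collapses the telescoping sum to $\eta_i^N = (1+s_N^{})^{N-i}_{}\eta_N^N$ with $\eta_N^N = \big(\sum_{j=0}^{N-1}(1+s_N^{})^j_{}\big)^{-1}$. Equivalently, given (iii), identity (i) is nothing but the first-step relation $(2+s_N^{})h_i^N = h_{i-1}^N + (1+s_N^{})h_{i+1}^N$ for $Z^N$, which holds because $\lambda_i^N/\mu_i^N = 1+s_N^{}$. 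The coupling of the previous paragraph is what upgrades the statement to a genuine particle-representation proof.
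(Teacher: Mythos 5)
Your proposal is correct, but only one of its two routes is actually complete, and that route coincides with the paper's first proof of Thm.~\ref{Thm Fixation}: your ingredient (iii) --- colouring lines with label $\le m$ as type $A$ and checking that the resulting count is the birth--death chain with rates \eqref{eq:lambda_mu} started at $m$ --- is exactly the paper's process $\bar{Z}^{N,m}_{}$, and your closing ``shortcut'' (substitute the known formula \eqref{h_i} into $\eta_i^N = h_i^N - h_{i-1}^N$) is precisely how the paper finishes; ingredients (i) and (ii) are then not needed. What you call the heart of the matter --- deriving the ratio $\eta_i^N = (1+s_N^{})\,\eta_{i+1}^N$ by a particle-level coupling --- corresponds to the paper's \emph{second} proof, but there the device is different, and it is exactly the device that removes your acknowledged obstacle. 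Rather than transposing adjacent labels and performing surgery at the $i/(i{+}1)$ interface, the paper applies a single cyclic shift of all arrow \emph{positions} by $i$ places to a realisation in which label $N$ fixes, keeping the meaning of the labels fixed. No arrow ever needs to be reoriented or deleted: every arrow configuration remains a valid realisation of the labelled model (all rates are positive), so the map is automatically a bijection from $\{I^N_{}=N\}$ onto $\{I^N_{}=i\}$, and one only has to track the change of density. The case analysis shows that arrows creating external descendants and the arrows later eliminating them contribute cancelling factors $(1+s_N^{})$ and $(1+s_N^{})^{-1}$, leaving exactly $N-i$ uncancelled factors --- the neutral arrows replacing the labels $1,\dots,N-i$, which become selective-rate arrows after the shift --- whence $\eta_i^N = (1+s_N^{})^{N-i}\eta_N^N$ and \eqref{eta_N} by normalisation, with no appeal to \eqref{h_i}. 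If you want a genuine particle-representation proof (and the classification of arrows into defining events that the paper later builds on it), replace your adjacent-transposition surgery by this cyclic permutation; as stated, the canonicity and bijectivity of your interface surgery is a real gap, not a routine verification.
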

We will give two proofs of Thm. \ref{Thm Fixation}. The first provides an intuitive explanation
and is based on the type frequencies.
In the second proof, we will use an alternative approach in the spirit of the reflection
principle of Sec. \ref{Reflection principle}, which provides more insight into
the particle representation. This proof is somewhat more complicated, but it permits us to classify
reproduction events into those that have an effect on the fixation probability
of a given label and those that do not; this will become important later on. 
In analogy with Def. \ref{new descendant} we define \textit{external descendants of 
labels in $\mathcal{S}$}, $\mathcal{S}\subseteq \{1, \dots, N \}$, as descendants
of individuals with labels in $\mathcal{S}$ that originate by replacing an individual
of a label in the complement of $\mathcal{S}$.

\begin{proof}[First proof of Thm. \ref{Thm Fixation}]
For a given $i$, let $\bar{Z}^{N,i}_t$ be
the number of individuals with labels in $ \{ 1 , \dots , i \} $ at time $t$. 
Like $Z^N_{}$, the process $\bar{Z}^{N,i}_{}=\big( \bar{Z}^{N,i}_t \big)_{t \geq 0}$
is a birth-death process with rates $\lambda_j^N$ and $\mu_j^N$ of \eqref{eq:lambda_mu}.
This is because every individual with label in $ \{ 1 , \dots , i \} $
sends arrows into the set with labels in $ \{ i+1 , \dots , N \} $ at rate
$(1+s_N^{})/N$; in the opposite direction, the rate is $1/N$ per pair of individuals;
and arrows within the label classes do not matter. Since $\bar{Z}_0^{N,i} = i$,
the processes $\bar{Z}^{N,i}_{}$ and $Z^N_{}$ thus have the same law provided $Z_0^N = i$. 
As a consequence, 
$\mathbb{P} (I^N_{}\leq i) = \sum_{j=1}^i \eta_j^N = h_i^N$, $1 \leq i \leq N$, which,
together with \eqref{h_i}, immediately yields the assertions 
of Thm. \ref{Thm Fixation}. 
\end{proof}

\begin{proof}[Second proof of Thm. \ref{Thm Fixation}]
This proof aims at a direct calculation of $\eta_i^N$, $1 \leq i \leq N-1$,
as a function of $\eta_N^N$.
 Let a realisation of the labelled Moran model be given in which 
 label $N$ becomes fixed (this happens with probability $\eta_N^N$, still
 to be calculated).
 The basic idea now is to move every arrow 
 by way of a cyclic permutation of the arrows' positions, while keeping the initial ordering of the labels.
 More precisely, in the graphical representation
we move every arrow $i$ positions to the right (or, equivalently, $N-i$ positions to the left).
 That is, we shift an arrow that appears at time $t$ with tail at position $k$ and head at position $\ell$,
 such that it becomes an arrow that emanates from position $(k + i) \mod N$ and
 points to position $(\ell +i) \mod N$, again at time $t$.
 We thus obtain what we will call the \textit{permuted realisation of order $i$}. In
 this realisation, label $i$ is fixed,
 as illustrated in Fig. \ref{cyclic} for a labelled Moran model of size $N=8$. Here, the
 descendants of the label that becomes fixed are marked bold. Arrows that are marked 
 by $1$ respectively $1+s_N^{}$ appear at rate $1/N$ respectively
$(1+s_N^{}) /N$. A shift of every arrow
of $i = 5$ positions to the right transforms the left realisation,
in which label $N=8$ becomes fixed, into the right one (with fixation
of label $i=5$).
 
 Throughout, we keep the original meaning of the labels: Between every ordered pair of lines with
 labels $(i,j)$, arrows appear at rate $(1+s_N^{})/N$ if $j >i$, and at rate $1/N$
 otherwise. 
 Since, in the permuted realisation, we change the \textit{position} of each arrow, it now may
 affect a different pair of \textit{labels}, which may change the arrow's rate. As a result, the 
 permuted realisation has a different weight than the original one; this will now 
 be used to calculate $\eta_i^N$. (Mathematically, the following may be seen
 as a \textit{coupling argument}.)
 
 So, let $\Omega^N_{}$ be the set of realisations of the labelled Moran model 
 for $t \in [0,T]$, where $T$ now is the time at which
 one of the labels is fixed. Let $P^N_{}$ be the probability measure on $\Omega^N_{}$.
 For a realisation $\omega_N^{} \in \Omega^N_{}$ in which label $N$ becomes fixed (cf. Fig. \ref{cyclic}, left), we define
 $\omega_i^{} \in \Omega^N_{}$ as the corresponding permuted realisation of order $i$ (cf. Fig. \ref{cyclic}, right).
 We will now
 calculate $\eta_i^N$ by assessing the weight of the measure of $\omega_i^{}$ relative
 to that of $\omega_N^{}$. 
Below we run briefly through the cases to analyse the change of weight on the various types of arrows.
To this end, the label sets $\{ 1, \dots , N-i \}$ and $\{ N-i+1, \dots , N \}$ (left),
and $\{ 1, \dots , i \}$ and $\{ i+1, \dots , N \}$ (right), respectively,
are encircled at the top of Fig. \ref{cyclic}.\begin{figure}[ht]
\begin{center}
\begin{minipage}{0.4 \textwidth}
\begin{center}
\psfrag{1}{\Large$1$}
\psfrag{2}{\Large$2$}
\psfrag{3}{\Large$3$}
\psfrag{4}{\Large$4$}
\psfrag{5}{\Large$5$}
\psfrag{6}{\Large$6$}
\psfrag{7}{\Large$7$}
\psfrag{8}{\Large$8$}
\psfrag{1+s}{\Large$1+s_N^{}$}
\includegraphics[angle=0, width=6cm, height=4.5cm]{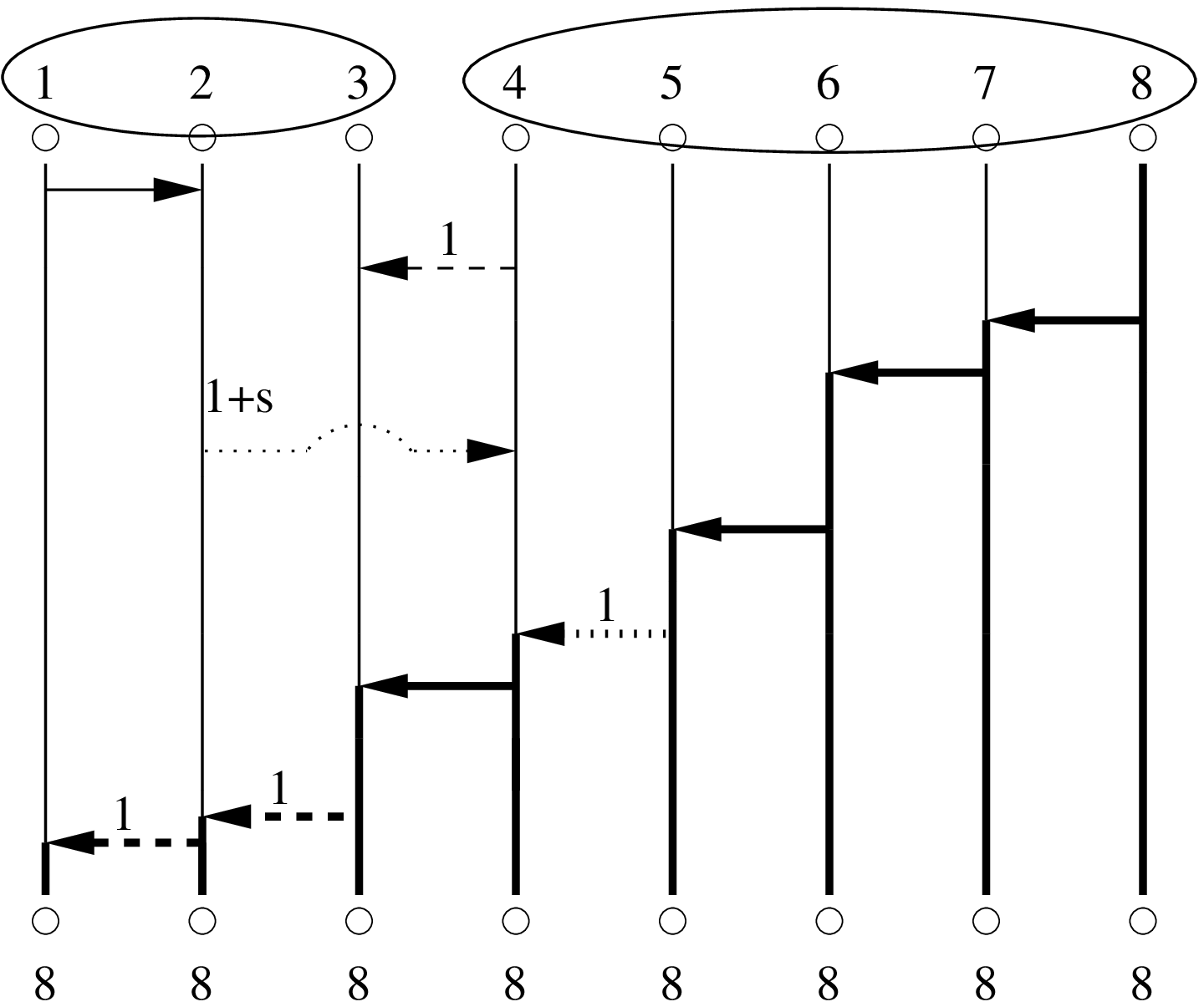}
\end{center}
\end{minipage}
\hspace{0.5cm}
\begin{minipage}{0.4 \textwidth}
\begin{center}
\psfrag{1}{\Large$1$}
\psfrag{2}{\Large$2$}
\psfrag{3}{\Large$3$}
\psfrag{4}{\Large$4$}
\psfrag{5}{\Large$5$}
\psfrag{6}{\Large$6$}
\psfrag{7}{\Large$7$}
\psfrag{8}{\Large$8$}
\psfrag{1+s}{\Large$1+s_N^{}$}
\includegraphics[angle=0, width=6cm, height=4.5cm]{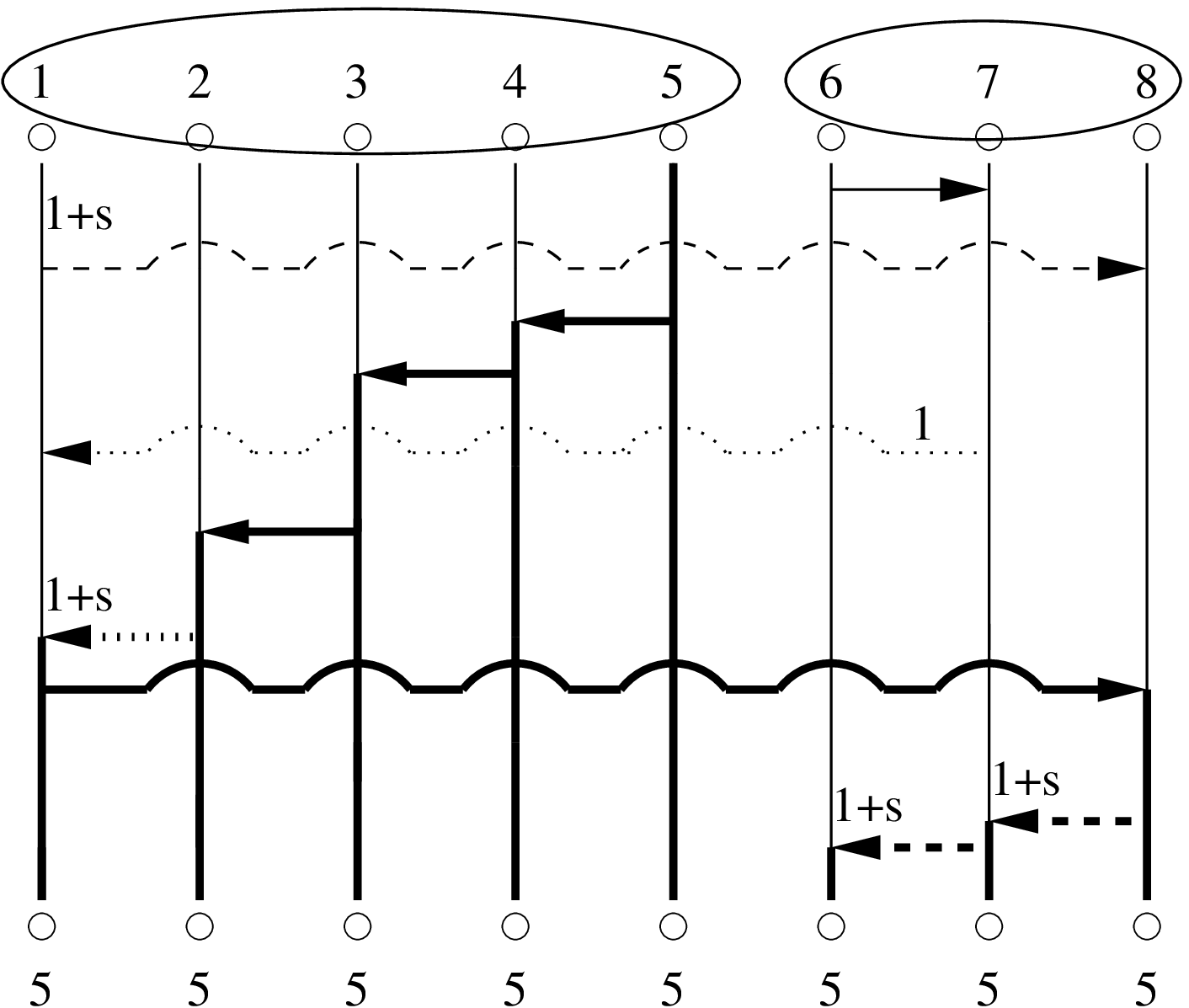}
\end{center}
\end{minipage}
\caption{Cyclic permutation in the labelled Moran model: $N=8$, $i=5$, realisations $\omega^{}_8$ (left) and
$\omega^{}_5$ (right).
Altogether $P^8_{} (\omega_5^{} ) d \omega_8^{}= (1+s_N^{})^{-1} (1+s_N^{})(1+s_N^{})^3 P^8_{} (\omega_8^{} ) d \omega_8^{}$.}
\label{cyclic}
\end{center}
\end{figure}

 \begin{enumerate}
\item[(a)] Arrows in $\omega^{}_N$ that point from and to labels within the 
sets $\{ 1, \dots , N-i \}$ or $\{ N-i+1, \dots , N \}$, respectively, 
turn into arrows within the sets $\{i+ 1, \dots , N \}$
or $\{ 1, \dots , i \}$, respectively, under the permutation. Such arrows retain their
`direction' (with respect to the labels) and thus appear at identical
rates in $\omega^{}_N$ and $\omega^{}_i$ (cf. solid arrows in Fig. \ref{cyclic}). 
 
 \item[(b)] Arrows in $\omega^{}_N$ that emanate from
 the set of labels $\{1, \dots , N-i \}$ and point to the set of labels 
 $\{N-i+1, \dots , N \}$ occur at rate $(1 +s_N^{})/N$ and create
 external descendants of labels in $\{1, \dots , N-i \}$.
 Since label $N$ becomes fixed, every such external descendant 
 is eventually eliminated by an arrow at rate $1/N$.
 The corresponding situation in $\omega^{}_i$ 
 concerns external descendants of labels in $\{i+ 1, \dots , N \}$, which result
 from neutral arrows and finally are replaced at rate $(1 + s_N^{})/N$ each.
 In Fig. \ref{cyclic} there is exactly one external descendant of the labels in $\{ 1, \dots , N-i \}$ (left)
and $\{ i+ 1, \dots , N \}$ (right), respectively. It originates from the respective
first dotted arrow, and is replaced via the second dotted arrow.
 
 \item[(c)] It remains to deal with the replacement of the labels $1, \dots , N-i $
 (except for their external descendants) in $\omega^{}_N$.
 Exactly $N-i$ neutral arrows are responsible for this, they transform 
 into arrows at rate $(1 + s_N^{})/N$ through the permutation. 
 These are the dashed arrows in Fig. \ref{cyclic}. 
 \end{enumerate}
 
 Let now $D_{\leq i}^{N} (\omega_N^{})$ be the number of external descendants of labels in $\{1, \dots , i \}$
 in $\omega^{}_N$ ($D_{\leq i}^{N} (\omega_N^{}) < \infty$ almost surely).
 Then, (a) - (c) yield for the measure of $\omega^{}_i$
 \begin{align*}
  P^N_{} (\omega_i^{} ) d \omega_N^{}
  &=(1+s_N^{})^{D_{\leq i}^{N}(\omega_N^{})} \Bigl( \frac{1}{1 + s_N^{}} \Bigr)^{D_{\leq i}^{N}(\omega_N^{})}  (1+s_N^{})^{N-i} P^N_{} (\omega_N^{}) d \omega_N^{} \displaybreak[0] \\
  &= (1+s_N^{})^{N-i} P^N_{} (\omega_N^{}) d \omega_N^{} .
 \end{align*}
As in the reflection principle, the effects of the external descendants cancel each other and so the 
relative weights of $ P^N_{} (\omega_i^{}) d \omega_N^{}$ and $P^N_{} (\omega_N^{}) d \omega_N^{}$
are independent of the particular choice of $\omega^{}_N$. Since the cyclic permutation yields 
a one-to-one correspondence between realisations that lead to fixation of label $N$ and 
label $i$, respectively, we obtain
\begin{equation}
 \eta_i^N = (1+s_N^{})^{N-i} \eta_N^N, \quad 1 \leq i \leq N .
 \label{eta_i Beweis}
\end{equation}
Finally, the normalisation
\begin{equation*}
 1 = \sum_{i=1}^N \eta_i^N = \eta_N^N \sum_{i=1}^N (1+s_N^{})^{N-i}
\end{equation*}
yields the assertion of Thm. \ref{Thm Fixation}.
 
\end{proof}

\subsection{Defining events}
\label{Defining events}
We are now ready to investigate the effect of selection in more depth.
The second proof of Thm. \ref{Thm Fixation} 
allows us to identify the reproduction
events that affect the distribution of $I^N_{}$ (i.e. that are responsible
for the factor $(1+s_N^{})^{N-I_{}^N}$ in \eqref{eta_i Thm}) with the dashed
arrows in Fig. \ref{cyclic} (cf. case (c)), whereas dotted arrows appear pairwise and their effects cancel
each other (cf. case (b)). It is natural to term these reproductions \textit{defining events}:
\begin{definition}
A defining
 event is an arrow that emanates from the set of labels $\{1, \dots , I^N_{} \}$
 and targets individuals with labels in the set $\{ I^N_{}+1, \dots , N \}$ that are not 
 external descendants of labels in $\{ I^N_{}+1, \dots , N \}$.
\end{definition}
Loosely speaking, a defining event occurs every time the descendants of $\{1, \dots , I^N_{}\}$
`advance to the right'.
In particular, for every $j \in \{ I^N_{}+1, \dots , N \}$, the first arrow that emanates from 
a label in $\{ 1, \dots , I^N_{}  \}$ and hits the individual at position $j$ is a defining event.
Altogether, there will be $N - I^N_{}$ defining events until fixation.
It is important to note that they need not be reproduction
events of the ancestral label $I^N_{}$ itself. 

See Fig. \ref{defining} for an illustration, in which the ancestor $I^N_{}$ is indicated at the top, its descendants  are 
marked bold, and the defining events are represented by dashed arrows.
On the left, both defining events are reproduction events of $I^N_{}$. The
arrows that are indicated by \textasteriskcentered \
and \textasteriskcentered\textasteriskcentered \ 
are not defining events, since the first one (\textasteriskcentered) concerns
only labels within $\{ I^N_{}+1, \dots , N \}$ and the second one 
(\textasteriskcentered\textasteriskcentered) targets an external descendant of 
$\{ I^N_{}+1, \dots , N \}$. On the
right, only the second defining event is a reproduction event of 
$I^N_{}$, whereas the first one in a reproduction event of a label less than $I^N_{}$. \begin{figure}[ht]
\begin{center}
\begin{minipage}{0.3 \textwidth}
\begin{center}
\psfrag{P}{\Large$I^N_{}$}
\psfrag{*}{\Large\textasteriskcentered}
\psfrag{**}{\Large\textasteriskcentered\textasteriskcentered}
\includegraphics[angle=0, width=3.5cm, height=4.2cm]{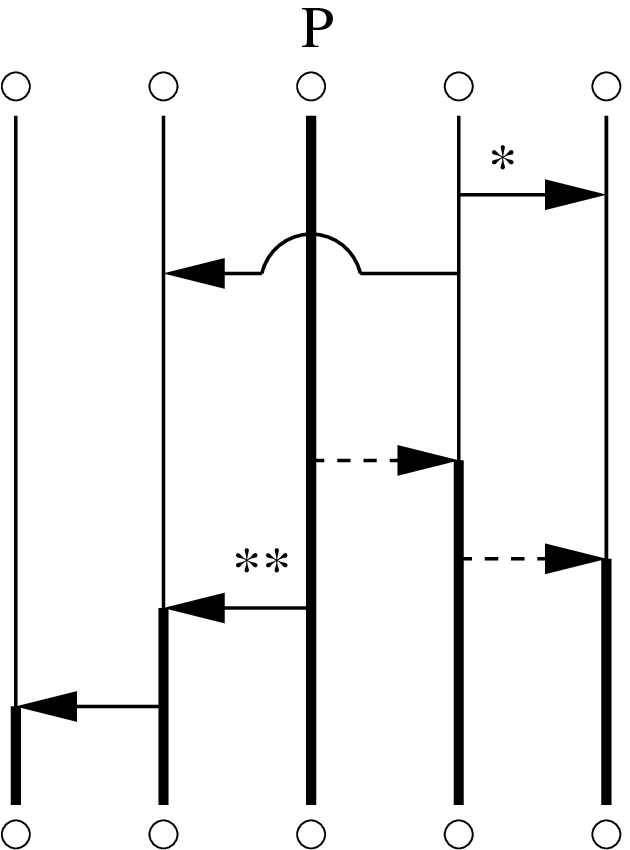}
\end{center}
\end{minipage}
\hspace{0.5cm}
\begin{minipage}{0.3 \textwidth}
\begin{center}
\psfrag{P}{\Large$I^N_{}$}
\includegraphics[angle=0, width=3.5cm, height=4.2cm]{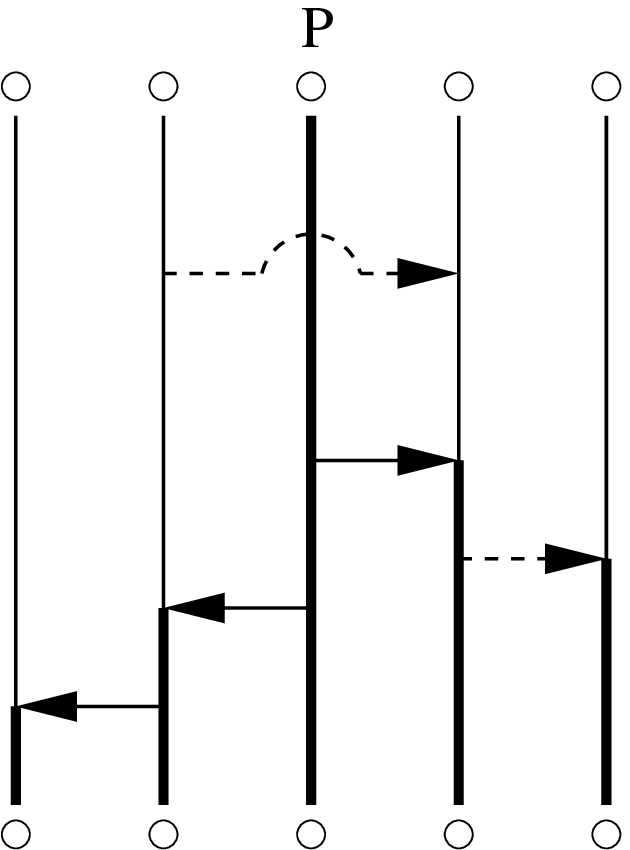}
\end{center}
\end{minipage}
\caption{Defining events in the labelled Moran model.}
\label{defining}
\end{center}
\end{figure}

It is clear that all defining events appear at rate $(1+s_N^{})/N$. Decomposing these arrows into 
neutral and selective ones reveals 
that
each defining event is either a selective (probability 
$s_N^{}/(1+s_N^{})$) or a neutral (probability $1/(1+s_N^{})$) reproduction event, 
independently of the other defining events and of $I^N_{}$. Let $V_i^N$, $I^N_{}+1 \leq i \leq N$, be the corresponding family
of Bernoulli random variables that indicate whether the respective defining event
is selective. Let $Y^N_{}$ denote the 
\textit{number of selective defining events}, that is, 
\begin{equation}
Y^N_{} := \sum_{i=I^N_{}+1}^{N} V_i^N 
\label{Y^N}
\end{equation}
with the independent and identically distributed (i.i.d.) Bernoulli variables $V_i^N$
just defined. $Y^N_{}$ will turn out as a pivotal quantity for everything to
follow. We now characterise its distribution, and the dependence between $I^N_{}$
and $Y^N_{}$. This will also provide us with an interpretation of the coefficients
$a_n^N$ in \eqref{h_i_alt}, as well as another  representation of the 
fixation probabilities.

It is clear from \eqref{Y^N} that $Y^N_{}$ is distributed according to Bin$(N - I^N_{}, s_N^{}/(1+s_N^{}) )$, 
the binomial distribution with parameters
$N - I^N_{}$ and $s_N^{}/(1+s_N^{})$ in the
sense of a two-stage random experiment. That is, given $I^{N}_{}=i$, $Y^{N}_{}$ follows
Bin$(N-i,s_N^{}/(1+s_N^{}) )$.  
Thus, for $0 \leq n \leq N-i$, we obtain (via
\eqref{eta_i Thm}) 
\begin{align}
 &\mathbb{P}(Y^N_{}=n, I^N_{} = i) =
 \binom{N-i}{n}  \Bigl( \frac{s_N^{}}{1+s_N^{}} \Bigr)^n \Bigl( \frac{1}{1+s_N^{}} \Bigr)^{N-i-n} \eta_i^N
 = \binom{N-i}{n} s^n_N \eta_N^N 
 \label{verteilung selective def. events, fixierende Linie}\\
 \intertext{and thus}
& \mathbb{P}(Y^N_{}=n) = \sum_{i=1}^{N-n} \binom{N-i}{n} s^n_{N} \eta_N^N
=\sum_{i=n}^{N-1} \binom{i}{n} s^n_{N} \eta_N^N
 = \binom{N}{n+1} s^n_{N} \eta_N^N,
 \label{verteilung selective def. events}
\end{align}
where the last equality is due to the well-known identity 
\begin{equation}
 \sum_{i= \ell }^k \binom{i}{\ell} = \binom{k+1}{\ell +1},  \quad \ell, k \in \mathbb{N}_0, 0 \leq \ell \leq k.
 \label{identity binom}
\end{equation}
In particular, $\mathbb{P}(Y^N_{}=0)= N \eta_N^N$.
Obviously, $\mathbb{P}(Y^N_{}=n)$, $n \geq 1$, may also be expressed recursively as
\begin{equation}
 \mathbb{P}(Y^N_{}=n)= s_N^{} \frac{N-n}{n+1}  \mathbb{P} (Y^N_{}=n-1).
 \label{verteilung selective def events neu2}
\end{equation}
Furthermore, equations \eqref{verteilung selective def. events, fixierende Linie} and
\eqref{verteilung selective def. events} immediately yield the
conditional probability
\begin{equation}
 \mathbb{P} (I^N_{} =i \mid Y^N_{} = n) = \frac{\binom{N-i}{n}}{\binom{N}{n+1}}.
 \label{conditional fixation}
\end{equation}

Compare now \eqref{verteilung selective def events neu2} 
with \eqref{a_n} and, for the initial condition, compare \eqref{verteilung selective def. events}
with \eqref{a_0} (with the help of \eqref{eta_N}). Obviously, the  $\mathbb{P}(Y^N_{} = n)$
and the $a_n^N-a_{n+1}^N$, $0 \leq n \leq N-2$, have the same initial value (at $n=0$) and follow the same recursion,
so they agree. Via normalisation, we further have
\begin{equation*}
\mathbb{P} (Y^N_{} = N-1) = 1 - \sum_{n=0}^{N-2} (a_n^N -a^N_{n+1}) = a_{N-1}^N.
\end{equation*}
We have therefore found a probabilistic meaning for the $a_n^N$, $0 \leq n \leq N-1$,
namely, 
\begin{equation}
a_n^N = \mathbb{P}(Y^N_{} \geq n)  .
\label{a_n^N Interpretation}
\end{equation}
This will
be crucial in what follows.

Another interesting characterisation is the following:
\begin{proposition}
Let $W^N_{} := Y^N_{} + 1$. Then $W^N_{}$ has the distribution of a random variable that follows 
\rm{Bin$(N, s_N^{}/(1+s_N^{}))$}, \textit{conditioned to be positive.}
\label{Proposition Y+1}
\end{proposition}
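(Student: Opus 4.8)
The plan is to compute the distribution of $W^N_{} = Y^N_{} + 1$ directly from the formula for $\mathbb{P}(Y^N_{} = n)$ established in \eqref{verteilung selective def. events}, and to compare it with the distribution of a $\mathrm{Bin}(N, s_N^{}/(1+s_N^{}))$ random variable conditioned to be positive. Write $p := s_N^{}/(1+s_N^{})$, so that $1 - p = 1/(1+s_N^{})$. For a random variable $U \sim \mathrm{Bin}(N, p)$ conditioned on $U \geq 1$, we have, for $1 \leq m \leq N$,
\begin{equation*}
\mathbb{P}(U = m \mid U \geq 1) = \frac{\binom{N}{m} p^m (1-p)^{N-m}}{1 - (1-p)^N}.
\end{equation*}
So the goal is to show that $\mathbb{P}(W^N_{} = m)$ equals the right-hand side for $1 \leq m \leq N$, i.e. that $\mathbb{P}(Y^N_{} = n)$ equals that expression with $m = n+1$.

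First I would rewrite $\mathbb{P}(Y^N_{} = n) = \binom{N}{n+1} s_N^n \, \eta_N^N$ by substituting $\eta_N^N = 1/\sum_{j=0}^{N-1}(1+s_N^{})^j$ from \eqref{eta_N} and pulling out a factor $(1+s_N^{})^{-N}$: since $\sum_{j=0}^{N-1}(1+s_N^{})^j = \big((1+s_N^{})^N - 1\big)/s_N^{}$ (for $s_N^{} > 0$), one gets $\eta_N^N = s_N^{}/\big((1+s_N^{})^N - 1\big)$. Then
\begin{equation*}
\mathbb{P}(Y^N_{} = n) = \binom{N}{n+1} s_N^n \cdot \frac{s_N^{}}{(1+s_N^{})^N - 1} = \binom{N}{n+1} \frac{s_N^{n+1}}{(1+s_N^{})^N - 1}.
\end{equation*}
Multiplying numerator and denominator by $(1+s_N^{})^{-N}$ and writing $s_N^{} = p/(1-p)$, $1/(1+s_N^{}) = 1-p$, the numerator becomes $\binom{N}{n+1} p^{n+1} (1-p)^{N-n-1}$ and the denominator becomes $1 - (1-p)^N$, which is exactly $\mathbb{P}(U = n+1 \mid U \geq 1)$. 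This identifies $\mathbb{P}(W^N_{} = n+1)$ with the conditioned binomial probability for every $n \in \{0, \dots, N-1\}$, hence $W^N_{}$ has the claimed distribution; one should also note the edge case $s_N^{} = 0$ separately, where $Y^N_{} \equiv 0$, $W^N_{} \equiv 1$, and $\mathrm{Bin}(N,0)$ conditioned to be positive is degenerate at $1$.

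I do not anticipate a genuine obstacle here — the statement is essentially an algebraic repackaging of \eqref{verteilung selective def. events}. The one point requiring a little care is the geometric-sum manipulation $\sum_{j=0}^{N-1}(1+s_N^{})^j = \big((1+s_N^{})^N - 1\big)/s_N^{}$, which is only valid for $s_N^{} > 0$; the degenerate case $s_N^{} = 0$ must be checked directly as above. Alternatively, one could avoid the case split entirely by arguing at the level of probability-generating functions or by simply verifying that the proposed conditioned-binomial probabilities sum to one and match \eqref{verteilung selective def. events} term by term, but the direct computation is the cleanest route.
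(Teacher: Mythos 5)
Your proposal is correct and is essentially the same argument as the paper's: both identify $\mathbb{P}(Y^N_{}=n)$ from \eqref{verteilung selective def. events} with the conditioned-binomial probability at $n+1$ via the geometric-series identity for $\eta_N^N$, only you compute from $Y^N_{}$ toward the binomial while the paper starts from the binomial $A$ and conditions on $A>0$. The only quibble is that for $s_N^{}=0$ the conditioning event has probability zero, so the conditioned law is strictly speaking undefined rather than ``degenerate at $1$''; the paper implicitly assumes $s_N^{}>0$ here.
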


\begin{proof}
 Let $A$ be a random variable distributed according to Bin$(N, s_N^{}/(1+s_N^{}))$. Then
 $$
 \mathbb{P}(A > 0) = 1- \frac{1}{(1+s_N^{})^N}
 = \Big( 1 - \frac{1}{1+s_N^{}} \Big) \sum_{i=0}^{N-1} \frac{1}{(1+s_N^{})^i}
 $$
 by the geometric series. For $n \geq 1$, therefore, 
 $$
 \mathbb{P}(A=n \mid A > 0)
 = \frac{\binom{N}{n} s^n_N(1+s_N^{})^{-N} }{\frac{s_N^{}}{1+s_N^{}} \sum_{i=0}^{N-1} (1+s_N^{})^{-i}  }
  = \frac{\binom{N}{n} s^{n-1}_N }{ \sum_{i=0}^{N-1} (1+s_N^{})^{i}  }
  = \mathbb{P}(Y^N=n-1),
 $$
 where the last step is due to \eqref{verteilung selective def. events}.
 This proves the claim. 
\end{proof}

Note that label $N$ is obviously not capable of selective reproduction 
events and its fixation implies the absence of (selective) defining events.
Its fixation probability $\eta_N^N$ coincides with the fixation probability
of any label $i$, $1 \leq i \leq N$, in the absence of any selective defining
events: $\mathbb{P}(Y^N_{}=0, I^N_{} = i) = \eta_N^N$, cf. \eqref{verteilung selective def. events, fixierende Linie}.
For this reason, we term $\eta_N^N$ the \textit{basic fixation probability}
of every label $i$, $1 \leq i \leq N$, and express all relevant quantities  
in terms of $\eta_N^N$. Note that, for $s_N^{} >0$, $\eta_N^N$ is different from the \textit{neutral
fixation probability}, $\eta_i^N = 1/N$, that applies to every label $i$, $1 \leq i \leq N$, in the case
$s_N^{}=0$. \\

Decomposition according to the number of selective defining events yields
a further alternative representation of the fixation probability $h_i^N$ (cf. \eqref{h_i}) of the
Moran model. Consider
\begin{equation}
\begin{split}
 \mathbb{P}(I^N \leq i \mid Y^N_{} =n) &= 
 \frac{1}{\binom{N}{n+1}}\sum_{j=1}^i \binom{N-j}{n} 
 =  \frac{1}{\binom{N}{n+1}} \sum_{j=N-i}^{N-1} \binom{j}{n}  
 = \frac{\binom{N}{n+1}-\binom{N-i}{n+1} }{\binom{N}{n+1}}, 
 \label{gemeinsame Vtlg Y I}
 \end{split}
\end{equation}
where we have used \eqref{conditional fixation} and \eqref{identity binom}.
This leads us to the following series expansion in $s_N^{}$: 
\begin{equation}
\begin{split}
 h_i^N = \mathbb{P}(I^N \leq i) 
 &= \sum_{n=0}^{N-1} \mathbb{P}(I^N \leq i \mid Y^N_{} =n) \mathbb{P} (Y^N_{} = n) \\
 &= \sum_{n=0}^{N-1} \bigg[ \binom{N}{n+1}-\binom{N-i}{n+1} \bigg]  s^n_N \eta_N^N.
 \end{split}
 \label{h_i alternativ}
\end{equation}
We will come back to this in the next Section.

\section{$Y^N_{}$ in the diffusion limit}
\label{Diffusion limit of the labelled Moran model}
In this Section we analyse the number of selective defining events in the diffusion
limit (cf. Sec. \ref{The Moran model with selection}).
First of all we recapitulate from \eqref{verteilung selective def. events, fixierende Linie}
and \eqref{eta_i Thm} that
\begin{equation*}
\mathbb{P}(Y^N_{}=n, I^N_{} \leq i) = 
\frac{1}{N} \sum_{j=1}^{i}  \binom{N-j}{n}  \Bigl( \frac{s_N^{}}{1+s_N^{}} \Bigr)^n \Bigl( \frac{1}{1+s_N^{}} \Bigr)^{N-j-n}
N (h_j^N - h_{j-1}^N).
\end{equation*}
For a sequence $(i_N^{})_{N \in \mathbb{N}}$ with $i_N \in \{ 1, \dots , N \}$, 
$\lim_{N \to \infty} i_N^{} / N=x$, and $x \in [0,1]$, 
this yields
\begin{equation}
 \lim_{N \to \infty} \mathbb{P}(Y^N_{}=n, I^N_{}/N \leq i_N^{}/N) = 
 \int_0^x \frac{(\sigma(1-y))^n_{}}{n!} \exp (- \sigma (1-y)) h'(y) dy, 
 \label{gemeinsameVerteilungLimes}
\end{equation}
where we have used the convergence of the binomial to the Poisson distribution. Thus, the 
sequence of random variables $(Y^N_{}, I^N_{}/N)_{N \in \mathbb{N}}$ converges in distribution
to a pair $(Y^{\infty}_{}, I^{\infty}_{})$ of random
variables with values in $\mathbb{N}_0 \times [0,1]$ and 
distribution function \eqref{gemeinsameVerteilungLimes}.
Marginalisation with respect to the second variable implies that
$I^{\infty}_{}$
has distribution function $h$. 
As a consequence, $Y^{\infty}_{}$ follows Poi$(\sigma (1-I^{\infty}_{}))$, the Poisson distribution 
with parameter $\sigma (1-I^{\infty}_{})$.

Since
\begin{equation}
 \mathbb{P}(Y^{\infty}_{} \geq n) = \lim_{N \to \infty} \mathbb{P}(Y^N_{}\geq n)
 = \lim_{N \to \infty} a_n^N = a_n^{},
 \label{a_n Interpretation}
\end{equation}
with coefficients $a_n^{}$ as in \eqref{a_n Fearnhead} and \eqref{a_0 Fearnhead},
we may conclude that $\mathbb{P} (Y^{\infty}_{}=n) = a_n^{} - a_{n+1}^{}$ for $n \geq 0$. 
We thus have found an interpretation of the coefficients $a_n^{}$ in \eqref{h(x)}. In particular,
\begin{equation}
 \begin{split}
 \mathbb{P}(Y^{\infty}_{} =0) &= \lim_{N \to \infty} \mathbb{P}(Y^N_{}=0)
 =\lim_{N \to \infty} N \eta_N^N  
 =\lim_{N \to \infty} \left[ \frac{1}{N} \sum_{j=0}^{N-1} \Bigl(1+\frac{Ns_N^{}}{N} \Bigr)^{\frac{j}{N}N}  \right]^{-1} \\
 &= \left[ \int_{0}^{1} \exp(\sigma p) dp \right]^{-1}
 = \frac{\sigma}{\exp(\sigma)-1},
 \label{a1 label}
 \end{split}
\end{equation}
where we have used \eqref{verteilung selective def. events} and \eqref{eta_N}.
According to \eqref{a_n Fearnhead} we have the recursion
\begin{equation} \label{an recursion}
 \mathbb{P}(Y^{\infty}_{}=n)
 =\frac{\sigma}{n+1}\mathbb{P}(Y^{\infty}_{}=n-1) 
 \end{equation}
for $n \geq 1$ and iteratively via \eqref{an recursion} and \eqref{a1 label} 
\begin{equation}
  \mathbb{P}(Y^{\infty}_{}=n)=\frac{\sigma^n_{}}{(n+1)!}\mathbb{P}(Y^{\infty}_{}=0)
 =\frac{\sigma^{n+1}_{}}{(n+1)!(\exp(\sigma)-1)} \label{an recursion2}
\end{equation}
for $n \geq 0$. With an argument analogous to that in Prop. \ref{Proposition Y+1}, one also obtains
that $W^{\infty}_{} := Y^{\infty}_{} + 1$ has the distribution of a random variable following 
Poi$(\sigma)$, conditioned to be positive.

We now aim at expressing $h$ in terms of a decomposition according to the
values of $Y^{\infty}_{}$ (in analogy with \eqref{h_i alternativ}). 
We recapitulate equation \eqref{gemeinsame Vtlg Y I} 
to obtain
\begin{equation*}
\mathbb{P} (I^{\infty}_{} \leq x \mid Y^{\infty}_{}=n)
 =\lim_{N \to \infty} \mathbb{P}(I^N_{} \leq i_N^{} \mid Y^N_{}=n)
= \lim_{N \to \infty} 1 - \frac{\binom{N-i_N^{}}{n+1}}{\binom{N}{n+1}} = 1-(1-x)^{n+1}.
\end{equation*}
Then, the equivalent to \eqref{h_i alternativ} is a series expansion in $\sigma$:
\begin{equation}
 \begin{split}
 h(x)&= \mathbb{P} (I^{\infty}_{} \leq x)
 = \sum_{n \geq 0} \mathbb{P} (I^{\infty}_{} \leq x \mid Y^{\infty}_{}=n)\mathbb{P} ( Y^{\infty}_{}=n)  \\
 &= \frac{1}{\exp(\sigma) - 1} \sum_{n \geq 1} \frac{1}{n!} (1-(1-x)^n) \sigma^n_{}.
 \label{h diffusion}
 \end{split}
\end{equation}
Note that this can also be derived directly from \eqref{h} by a Taylor expansion around
the point $x=1$.


Finally note that \eqref{h diffusion} may also be expressed as
\begin{equation}
 h(x)= \sum_{n \geq 1} (1-(1-x)^{n}) (a_{n-1}^{} - a_{n}^{})
 = \mathbb{E} (1-(1-x)^{W^{\infty}_{}}) .
 \label{h Pokalyuk}
\end{equation}
Interestingly, but not unsurprisingly, this coincides with a representation given by Pokalyuk and Pfaffelhuber
(2013, Lemma 2.2)
in their proof of Kimura's fixation probability \eqref{h}.
They follow an argument of Mano (2009) that establishes a connection between the ASG at stationarity and the
fixation probability. A key here is the insight that the number of lines in the ASG at stationarity has the distribution
of a random variable that follows
Poi$(\sigma)$, conditioned to be positive -- which coincides with the distribution of $W^{\infty}_{}$.

\section{The targets of selective defining events and construction of the ancestral line}
\label{A simulation algorithm}
We have, so far, been concerned with the ancestral label and with the number 
of selective defining events, but have not investigated the \textit{targets}
of these events yet. This will now be done. 
We begin with another definition.
\begin{definition}
 Let $Y^N_{}$ take the value $n$. We then denote by $J_1^N, \dots, J_n^N$, with $J_1^N < \dots < J_n^N$,
 the (random) positions that are hit by the $n$ selective defining events.
\end{definition}
See Fig. \ref{targets} for an example, in which $N=5$, $I^N_{}=2$, $Y^N_{}=2$, $J_1^N=3$ and $J_2^N=5$. Note that the 
first selective defining event hits position $3$, which is occupied by an individual
of label $4$ at that time.
\begin{figure}[ht]
\begin{center}
\psfrag{P}{\Large$I^N_{}$}
\psfrag{s}{\Large$s_N^{}$}
\psfrag{1}{\Large$1$}
\includegraphics[angle=0, width=3.5cm, height=4.2cm]{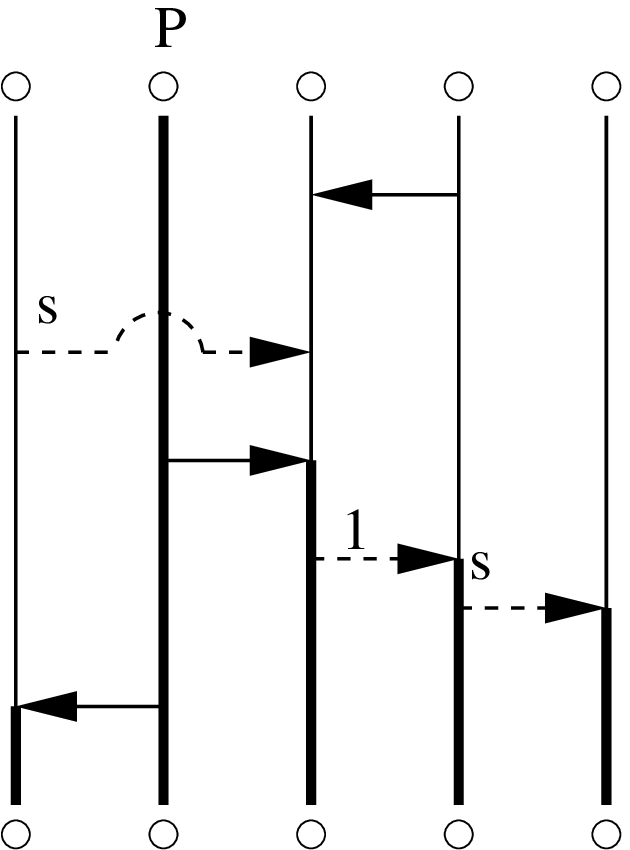}
\caption{Targets of selective defining events.} 
\label{targets}
\end{center}
\end{figure} 

In terms of the family $V_i^{N}$, $I^N_{}+1 \leq i \leq N$, of i.i.d. 
Bernoulli random variables (cf. Sec. \ref{Defining events}), the $J_1^N, \dots, J_{Y^N_{}}^N$ 
are characterised as
\begin{equation*}
\{J_1^N, \dots, J_{Y^N_{}}^N \} = \{ i \in \{ I^N_{} +1, \dots , N \} : V_i^N =1 \} .
\end{equation*}
(For $Y^N_{}=0$ we have the empty set.) Given that $Y^N_{}=n$, the 
$n$-tuple $(J_1^N, \dots, J_n^N)$ is uniformly distributed (without replacement) on the set of positions
$\{ I^N_{}+1, \dots, N \}$:
\begin{equation*}
 \mathbb{P}(J^N_1 = j_1^{}, \dots , J^N_n= j_n^{} \mid I^N_{}=i, Y^N_{}=n)
 = \frac{1}{\binom{N-i}{n}},
\end{equation*}
which implies (via \eqref{conditional fixation})
\begin{equation}
 \mathbb{P}(I^N_{}=i, J^N_1 = j_1^{}, \dots , J^N_n= j_n^{} \mid  Y^N_{}=n)
 = \frac{1}{\binom{N}{n+1}}.
 \label{verteilung tupel}
\end{equation}
Hence, the $(n+1)$-tuples $(I^N_{},J_1^N, \dots, J_n^N) $ are sampled from $\{1, \dots , N \}$
uniformly without replacement. Note that $W^N$ of Prop.~1 is the size of this
tuple. Note also that the tuples 
do not contain any further information
about the appearance of arrows in the particle picture. In particular, we do not learn which
label `sends' the arrows, nor do we include selective arrows that do not belong
to defining events.

 It is instructive to formulate
a simulation algorithm (or a construction rule) for these tuples. 
\begin{algorithm}
First draw the number of selective defining events, that is, a realisation $n$ of $Y^N_{}$ 
(according to \eqref{verteilung selective def. events}).
Then simulate $(I^N_{},J_1^N, \dots, J_n^N) $ in the following way: 
 \begin{description}
 \item [Step 0:] Generate a random number $U^{(0)}_{}$ that is uniformly distributed 
 on $\{1, \dots, N \}$. Define $\mathcal{I}^{(0)}_{} := U^{(0)}$. If 
 $n>0$ continue with step $1$, otherwise stop.
 \item[Step 1:] Generate (independently of $U^{(0)}_{}$) a second random number 
 $U^{(1)}_{}$ that is uniformly distributed on $\{1, \dots , N \} \setminus U^{(0)}_{}$.
 \begin{description}
  \item[(a)] If $U^{(1)}_{} > \mathcal{I}^{(0)}_{}$, define 
  $\mathcal{I}^{(1)}_{} := \mathcal{I}^{(0)}_{}$,
  $\mathcal{J}^{(1)}_{1} := U^{(1)}_{}$.
  \item[(b)] If $U^{(1)}_{} < \mathcal{I}^{(0)}_{}$, define $\mathcal{I}^{(1)}_{} :=  U^{(1)}_{}$,
  $\mathcal{J}^{(1)}_{1} :=\mathcal{I}^{(0)}_{}$.
 \end{description}
If $n>1$ continue with step $2$, otherwise stop.
\item[Step k:] Generate (independently of $U^{(0)}_{}, \dots, U^{(k-1)}_{}$) a random number 
 $U^{(k)}_{}$ that is uniformly distributed on $\{1, \dots , N \} \setminus \{ U^{(0)}_{}, \dots , U^{(k-1)}_{} \}$. 
 \begin{description}
  \item[(a)] If $U^{(k)}_{} > \mathcal{I}^{(k-1)}_{}$, define 
  $\mathcal{I}^{(k)}_{} := \mathcal{I}^{(k-1)}_{}$
  and assign the variables $U^{(k)}_{}, \mathcal{J}_1^{(k-1)}, \dots , \mathcal{J}_{k-1}^{(k-1)}$
  to $\mathcal{J}_1^{(k)}, \dots , \mathcal{J}_{k}^{(k)}$, such that 
  $\mathcal{J}_1^{(k)}< \dots < \mathcal{J}_{k}^{(k)}$.
  \item[(b)] If $U^{(k)}_{} < \mathcal{I}^{(k-1)}_{}$, define 
  $\mathcal{I}^{(k)}_{} :=  U^{(k)}_{}$,
  $\mathcal{J}^{(k)}_{1} :=\mathcal{I}^{(k-1)}_{}$ and 
  $\mathcal{J}^{(k)}_{\ell} :=\mathcal{J}^{(k-1)}_{\ell - 1}$
  for $2 \leq \ell \leq k$.
 \end{description}
If $n>k$ continue with step $k+1$, otherwise stop.
\end{description}
\label{algorithm}
\end{algorithm}
The simulation algorithm first produces a vector $(U^{(0)_{}}, \dots, U^{(n)}_{})$
uniformly distributed on the set of \textit{unordered} $(n+1)$-tuples and then turns it
into the vector $(\mathcal{I}^{(n)}_{}, \mathcal{J}^{(n)}_{1}, \dots , \mathcal{J}^{(n)}_{n})$,
which is uniformly distributed on the set of \textit{ordered} $(n+1)$-tuples. The latter therefore 
has the same distribution as the vector of random variables 
$(I^N_{}, J_1^N, \dots , J_n^{N})$, given $Y^N_{}=n$ (cf. \eqref{verteilung tupel}).
The interesting point now is that we may interpret
the algorithm as a procedure for the construction of the ancestral line. It successively
adds selective arrows to realisations of the labelled Moran
model, such that they coincide with additional selective defining events;
obviously, $n+1$ is the number of steps.

This genealogical interpretation is best explained with the help of the illustrations 
in Figs. \ref{step1} and \ref{stepk}. For each graphical representation the corresponding
ancestors ($\mathcal{I}^{(0)}_{}$, $\mathcal{I}^{(1)}_{}$, $\mathcal{I}^{(k-1)}_{}$,
and $\mathcal{I}^{(k)}_{}$,
respectively) and, if present, the targets of the selective defining events ($\mathcal{J}^{(1)}_{1}$,
$\mathcal{J}^{(k-1)}_{1}, \dots , \mathcal{J}^{(k-1)}_{k-1}$, and 
$\mathcal{J}^{(k)}_{1}, \dots , \mathcal{J}^{(k)}_{k}$, respectively) are indicated at
the top. Bold lines represent the genealogy of the 
entire population at the bottom. 
In step $0$ we randomly choose one of the $N$ labels. This represents
the label that becomes fixed, i.e. the ancestor, in a particle representation with no
selective defining events (cf. Fig. \ref{step1}, left). (Actually, this coincides with
the neutral situation, $s_N^{}=0$.) In the following steps selective arrows
are added one by one, where the point to note is that each of them may or may not move the ancestor
`to the left', depending on whether (a) or (b) applies. 
Lines that have been ancestors in previous steps of the algorithm are 
represented by dotted lines.
For instance, let us consider step $k$, that is, a realisation
with $k-1$ selective defining events is augmented by a further one.
In case (a) (cf. Figs. \ref{step1} and \ref{stepk}, middle) we add a 
selective defining event that does not change the label that becomes fixed
(i.e. the ancestor remains the same),
and that targets the newly chosen position $U^{(k)}_{}$. In contrast, in case 
(b) (cf. Figs. \ref{step1} and \ref{stepk}, right) the additional selective defining
event emanates from position $U^{(k)}_{}$ and hits the 
ancestor of the previous step (which ceases to be ancestor). The result is a
shifting of the ancestor `to the left', i.e. to position  $U^{(k)}_{}$. That is, each selective
defining event that goes back to case (b) gives rise to a new dotted line in Figs. 
\ref{step1} and \ref{stepk}.
To avoid misunderstandings, we would like to emphasise that the details of the
genealogies in the pictures are for the purpose of illustration only; the only thing we
really construct is the sequence of tuples $(\mathcal{I}^{(n)}_{}, \mathcal{J}_1^{(n)}, \dots , \mathcal{J}_{n}^{(n)})$.
\begin{figure}[ht]
\begin{center}
\begin{minipage}{0.315 \textwidth}
\psfrag{P}{\Large$\mathcal{I}^{(0)}_{}$}
\psfrag{1}{\Large$1$}
\begin{center}
\includegraphics[angle=0, width=3.5cm, height=4.2cm]{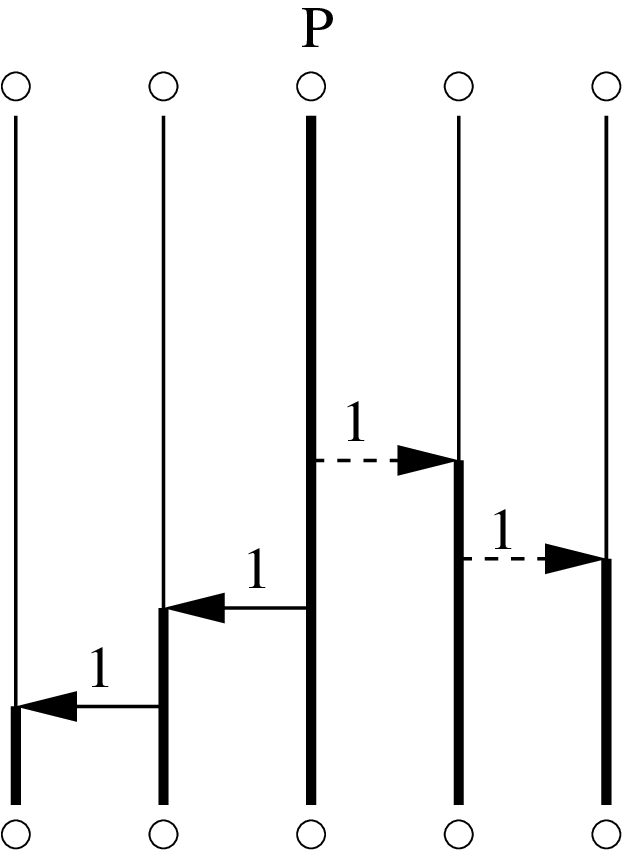}
\end{center}
\end{minipage}
\begin{minipage}{0.315 \textwidth}
\psfrag{P}{\Large$\mathcal{I}^{(1)}_{}$}
\psfrag{R}{\Large$\mathcal{J}^{(1)}_{1}$}
\psfrag{1}{\Large$1$}
\psfrag{s}{\Large$s_N^{}$}
\begin{center}
\includegraphics[angle=0, width=3.5cm, height=4.2cm]{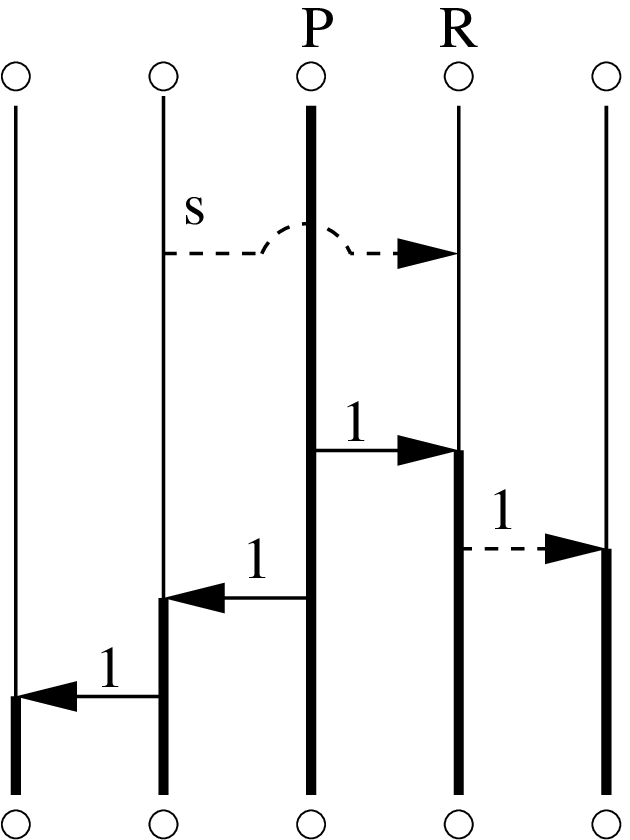}
\end{center}
\end{minipage}
\begin{minipage}{0.315 \textwidth}
\psfrag{P}{\Large$\mathcal{I}^{(1)}_{}$}
\psfrag{R}{\Large$\mathcal{J}^{(1)}_{1}$}
\psfrag{1}{\Large$1$}
\psfrag{s}{\Large$s_N^{}$}
\begin{center}
\includegraphics[angle=0, width=3.5cm, height=4.2cm]{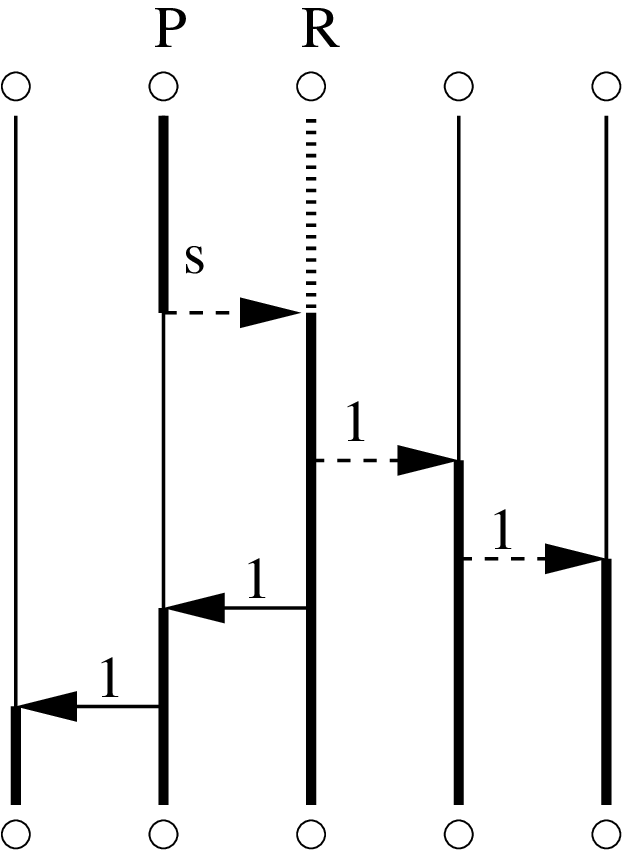}
\end{center}
\end{minipage}
\caption{Steps $0$ and $1$ of Algorithm \ref{algorithm} and corresponding genealogical interpretations.
Left: step $0$ (no selective defining events), middle: step $1$
(a), right: step $1$ (b) (each with one selective defining
event).} 
\label{step1}
\end{center}
\end{figure} 

\begin{figure}[ht]
\begin{center}
\begin{minipage}{0.315 \textwidth}
\psfrag{In}{\Large$\mathcal{I}^{(k-1)}_{}$}
\psfrag{Jn}{\Large$\mathcal{J}^{(k-1)}_{1}$}
\psfrag{Jp}{\Large$\mathcal{J}^{(k-1)}_{k-1}$}
\psfrag{1}{\Large$1$}
\psfrag{s}{\Large$s_N^{}$}
\begin{center}
\includegraphics[angle=0, width=4.3cm, height=4.5cm]{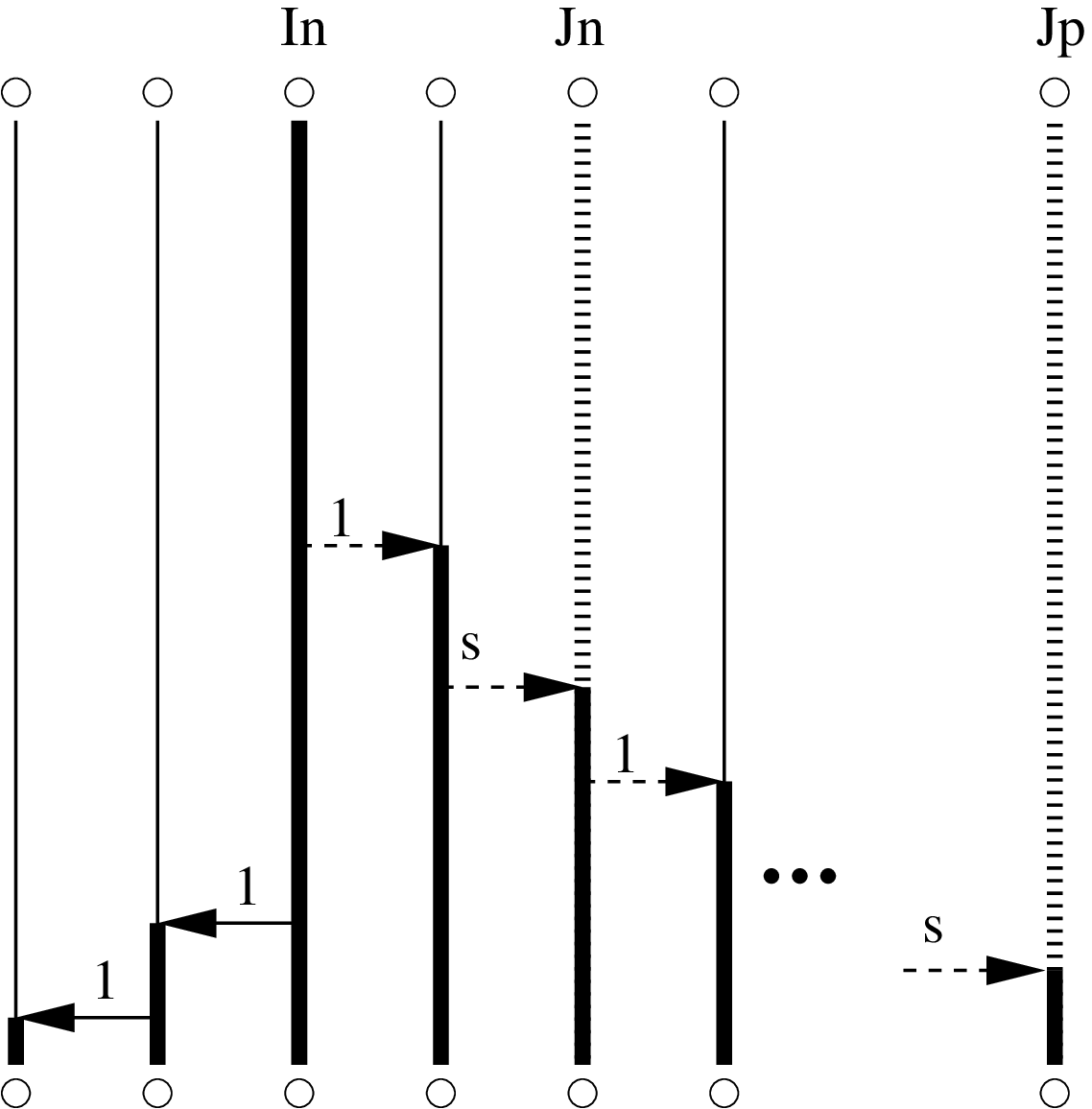}
\end{center}
\end{minipage}
\begin{minipage}{0.315 \textwidth}
\psfrag{In}{\Large$\mathcal{I}^{(k)}_{}$}
\psfrag{Jn}{\Large$\mathcal{J}^{(k)}_{1}$}
\psfrag{Jq}{\Large$\mathcal{J}^{(k)}_{2}$}
\psfrag{Jp}{\Large$\mathcal{J}^{(k)}_{k}$}
\psfrag{1}{\Large$1$}
\psfrag{s}{\Large$s_N^{}$}
\begin{center}
\includegraphics[angle=0, width=4.3cm, height=4.5cm]{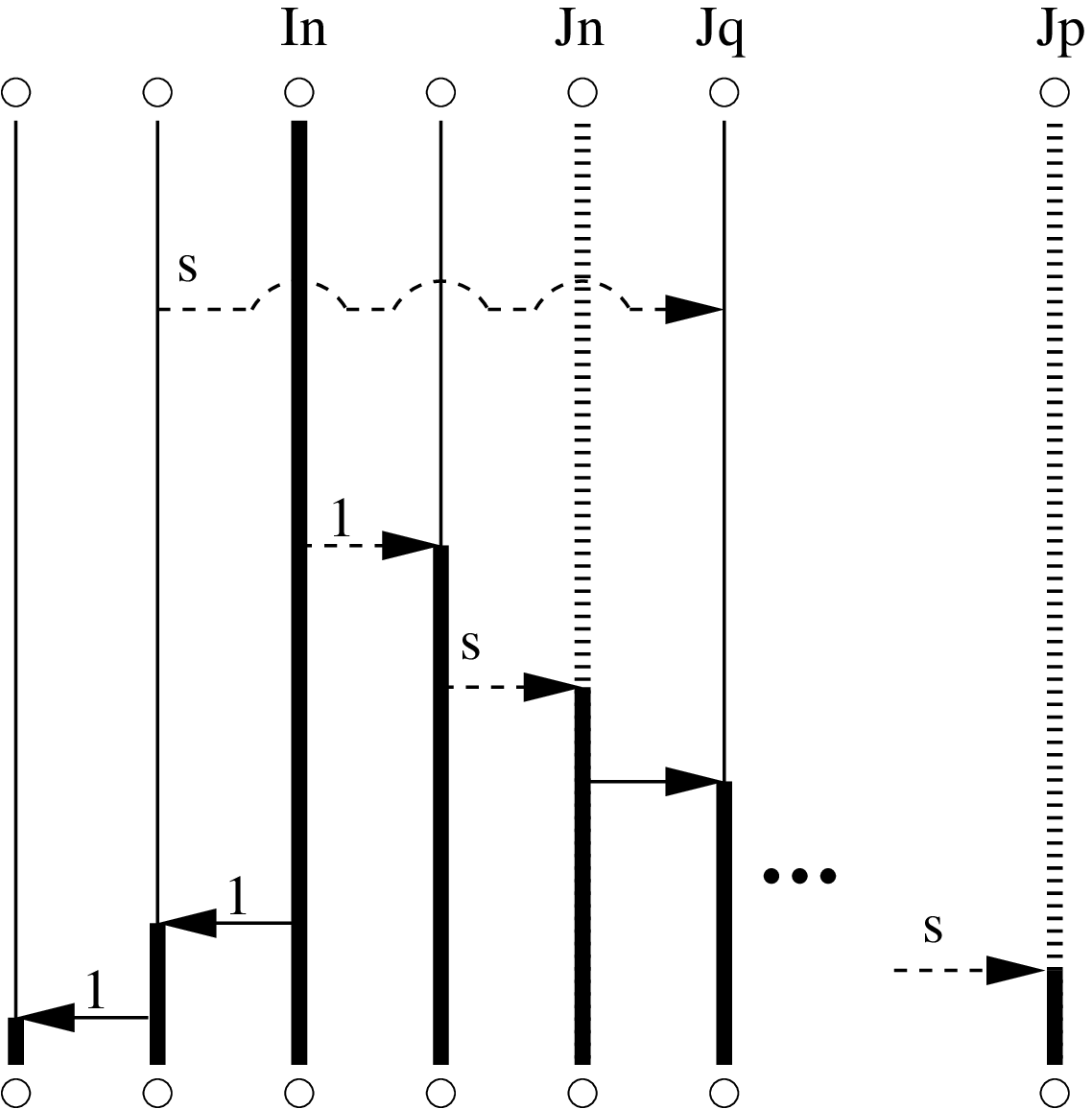}
\end{center}
\end{minipage}
\begin{minipage}{0.315 \textwidth}
\psfrag{Iv}{\Large$\mathcal{I}^{(k)}_{}$}
\psfrag{In}{\Large$\mathcal{J}^{(k)}_{1}$}
\psfrag{Jn}{\Large$\mathcal{J}^{(k)}_{2}$}
\psfrag{Jp}{\Large$\mathcal{J}^{(k)}_{k}$}
\psfrag{1}{\Large$1$}
\psfrag{s}{\Large$s_N^{}$}
\begin{center}
\includegraphics[angle=0, width=4.3cm, height=4.5cm]{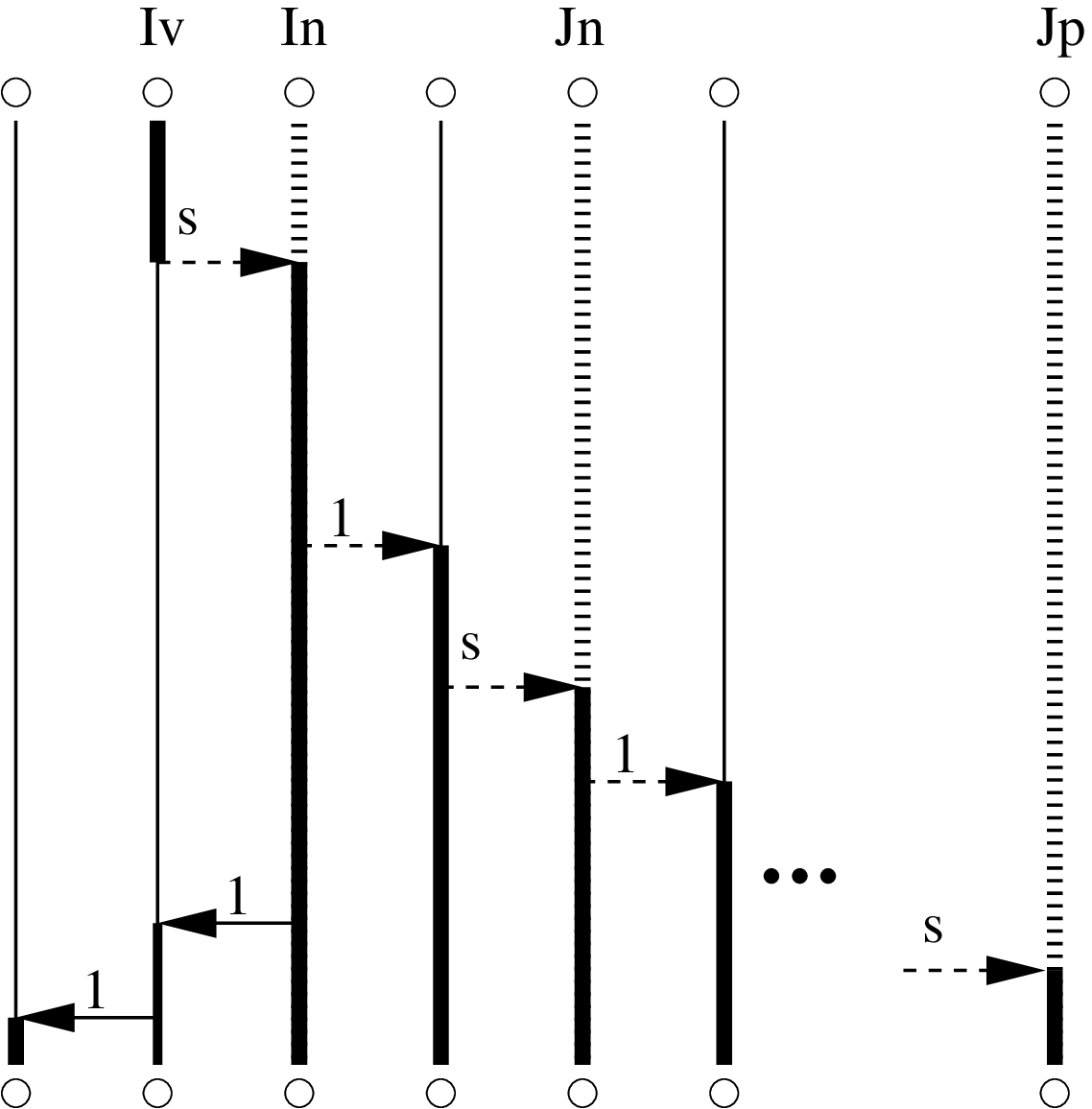}
\end{center}
\end{minipage}
\caption{Step $k$ of Algorithm \ref{algorithm} and corresponding genealogical interpretations. Situation
after step $k-1$ (left) and its modification according to step $k$ (a) (middle)
and step $k$ (b) (right).} 
\label{stepk}
\end{center}
\end{figure} 

We now have everything at hand to provide a genealogical interpretation for
the fixation probabilities
$h_i^N$ respectively $h(x)$ (cf. \eqref{h_i_alt} respectively \eqref{h(x)}). 
We have seen that the tuples $(\mathcal{I}^{(n)}_{},\mathcal{J}^{(n)}_{1}, \dots , \mathcal{J}^{(n)}_{n})$
are constructed such that
\begin{equation*}
 \mathbb{P}(\mathcal{I}^{(n)}_{} = i,\mathcal{J}^{(n)}_{1}= j_1^{}, \dots , \mathcal{J}^{(n)}_{n} = j_n^{})
 = \mathbb{P} (I^N_{}=i, J_1^N = j_1^{}, \dots , J_n^N = j_n^{} \mid Y^N_{} = n)
\end{equation*}
for all $n \geq 0$ and $j_1^{}, \dots , j_n^{} \in \{ i+1, \dots , N \}$, and, via marginalisation,
\begin{equation*}
 \mathbb{P}(\mathcal{I}^{(n)}_{} =i) = \mathbb{P} (I^N_{} = i \mid Y^N_{} =n).
\end{equation*}
We reformulate the decomposition of \eqref{h_i alternativ} to obtain
\begin{align}
h_i^{N} 
&=\mathbb{P} ( I^N_{} \leq i \mid Y^N_{}=0 ) \mathbb{P} ( Y^N_{} \geq 0 ) \nonumber  \\
& \hphantom{=} 
+ \sum_{n = 1}^{N-1} \left[ \mathbb{P} ( I^N_{} \leq i \mid Y^N_{}=n )-\mathbb{P} ( I^N_{} \leq i \mid Y^N_{}=n-1 ) \right]  \mathbb{P} ( Y^N_{} \geq n ) \nonumber \displaybreak[0] \\ 
&= \mathbb{P} (\mathcal{I}^{(0)}_{} \leq i) \mathbb{P} ( Y^N_{} \geq 0 ) 
+ \sum_{n = 1}^{N-1} \left[ \mathbb{P} ( \mathcal{I}^{(n)}_{} \leq i )-\mathbb{P} (\mathcal{I}^{(n-1)}_{} \leq i ) \right]  \mathbb{P} ( Y^N_{} \geq n ) \nonumber \displaybreak[0] \\
&=\mathbb{P} (\mathcal{I}^{(0)}_{} \leq i)\mathbb{P} ( Y^N_{} \geq 0 )
+ \sum_{n = 1}^{N-1} \mathbb{P}(\mathcal{I}^{(n)}_{} \leq i, \mathcal{I}^{(n-1)}_{} > i) \mathbb{P} ( Y^N_{} \geq n ),
\label{h_i simulation}
\end{align}
where the last equality is due to the fact that the ancestor's label is non-increasing in $n$.
In \eqref{h_i simulation}, the fixation probability $h_i^N$ is thus decomposed according to the first step in the 
algorithm in which the ancestor has a label in $\{ 1, \dots, i \}$. The
probability that this event takes place in step $n$ may, in view of the simulation
algorithm, be expressed explicitly as
\begin{align}
 \mathbb{P}(\mathcal{I}^{(n)}_{} \leq i, \mathcal{I}^{(n-1)}_{} > i)
 &=\mathbb{P}(\mathcal{I}^{(n)}_{} \leq i, \mathcal{I}^{(0)}_{} , \mathcal{I}^{(1)}_{} , \dots ,   \mathcal{I}^{(n-1)}_{} > i) \displaybreak[0] \nonumber \\
 &=\mathbb{P}(U^{(n)}_{} \leq i, U^{(0)}_{} , U^{(1)}_{} , \dots ,   U^{(n-1)}_{} > i) \displaybreak[0] \nonumber \\
 &=\frac{i}{N} \prod_{j=0}^{n-1} \frac{N-i-j}{N-1-j}.
 \label{diskrete sampling distribution}
\end{align}
Together with \eqref{h_i simulation} and \eqref{a_n^N Interpretation} 
this provides us with a term-by-term interpretation of \eqref{h_i_alt}.

In particular, \eqref{diskrete sampling distribution} implies that
\begin{equation*}
 \lim_{N \to \infty} \mathbb{P}(\mathcal{I}^{(n)}_{} \leq i_N^{}, \mathcal{I}^{(n-1)}_{} > i_N^{})
 = x (1-x)^n
\end{equation*}
for a sequence  $(i_N^{})_{N \in \mathbb{N}}$ with $i_N \in \{ 1, \dots , N \}$ and 
$\lim_{N \to \infty} i_N^{} / N=x$, $x \in [0,1]$. Thus,
in the diffusion limit, and with $a_n^{}= \lim_{N \to \infty} \mathbb{P}(Y^N_{} \geq n)$ of  
\eqref{a_n Interpretation}, \eqref{h_i simulation} turns into 
\begin{equation*}
 h(x) = \lim_{N \to \infty} h_{i_N^{}}^N = x + \sum_{n \geq 1} x (1-x)^n a_n^{},
\end{equation*}
which is the representation \eqref{h(x)}, and
which is easily checked to coincide with \eqref{h diffusion} (as obtained by
the direct approach of Sec. \ref{Diffusion limit of the labelled Moran model}).

\section{Discussion}
\label{Discussion}
We have reanalysed the process of fixation in a Moran model with two types
and (fertility) selection by means of 
the labelled Moran model. It is interesting to compare the labelled Moran model with the corresponding lookdown construction: In the lookdown
with fertility selection, \emph{neutral} arrows only point in one
direction (from lower to higher levels), whereas \emph{selective} arrows may 
appear between arbitrary levels. In contrast, the labelled Moran model contains 
\emph{neutral arrows in all directions}, but \emph{selective arrows} only occur from
lower to higher labels. Also, the labelled Moran model deliberately dispenses with
exchangeability, which is an essential ingredient of the lookdown. It may be 
possible to transform the labelled Moran model into a lookdown (by way of random permutations),
but this remains to be verified. 

We certainly do not advertise the labelled Moran model as a general-purpose tool;
in particular, due to its arbitrary neutral arrows, it does not allow the construction of
a sequence of models with increasing $N$ on the same probability space.
However,  it turned out to be particularly useful for the purpose considered in this paper;
the reason seems to be that the fixation probabilities of its individuals, $\eta_i^N$,
coincide with the individual terms in \eqref{h_i}. Likewise, we obtained a
term-by-term interpretation of the fixation
probability \eqref{h_i_alt}  as a decomposition according to
the number of selective defining events, which may successively shift the
ancestral line to the left, thus placing more weight on the `more
fit' individuals. 

Indeed, the incentive for this paper was the intriguing representation \eqref{h(x)}
of the fixation probabilities, which was first observed by Fearnhead (2002)
in the context of (a pruned version of) the ASG, for the diffusion limit of the Moran
model with mutation and selection at stationarity, and further investigated by Taylor (2007).
Quite remarkably, \eqref{h(x)} carries over to the case without mutation (which turns
the stationary Markov chain into an absorbing one), with Fearnhead's $a_n^{}$ reducing to ours
(see Kluth et al. (2013) and Sec. \ref{The Moran model with selection}). The observation that
our $W^{\infty}_{}$, that is, $1$ plus the number of selective defining events in the diffusion
limit, has the same distribution as the number of branches in the ASG (see Sec. \ref{Diffusion limit of the labelled Moran model})
fits nicely into this context, but still requires some further investigation.
Needless to say, the next challenge will be to extend the results to the case with mutation -- this
does not seem to be straightforward but is quite possible, given the tools and insights that
have become available lately.

\section*{Acknowldegments}
It is our pleasure to thank Anton Wakolbinger, Ute Lenz, and Alison Etheridge for
enlightening discussions. This project received financial support from
Deutsche Forschungsgemeinschaft (Priority Programme SPP 1590 `Probabilistic
Structures in Evolution', grant no. BA 2469/5-1).

\end{document}